\newcommand\semihuge{\@setfontsize\semihuge{22.3}{22}}
\newtheorem{theorem}{\bf Theorem}
\newtheorem{proposition}{\bf Proposition}
\newcommand\blfootnote[1]{%
  \begingroup
  \renewcommand\thefootnote{}\footnote{#1}%
  \addtocounter{footnote}{-1}%
  \endgroup
}
\DeclareMathOperator*{\argmax}{\arg\!\max}
\begin{document}\bstctlcite{IEEEexample:BSTcontrol}
\title{\LARGE Centralized and Distributed Age of Information Minimization with non-linear Aging Functions in the Internet of Things\vspace{-0.3cm}}    

\author{\IEEEauthorblockN{Taehyeun Park$^1$, Walid Saad$^1$, and Bo Zhou$^1$}\vspace{-0.05cm}\\
	\IEEEauthorblockA{\small $^1$Wireless@VT, Bradley Department of Electrical and Computer Engineering, Virginia Tech, Blacksburg, VA, USA,\\ Emails:\{taehyeun, walids, ecebo\}@vt.edu}
	\vspace{-0.92cm}}
\maketitle\vspace{-0.8cm}
\thispagestyle{plain}
\pagestyle{plain}
\vspace{-0.1cm}
\begin{abstract}
Resource management in Internet of Things (IoT) systems is a major challenge due to the massive scale and heterogeneity of the IoT system. For instance, most IoT applications require timely delivery of collected information, which is a key challenge for the IoT. In this paper, novel centralized and distributed resource allocation schemes are proposed to enable IoT devices to share limited communication resources and to transmit IoT messages in a timely manner. In the considered system, the timeliness of information is captured using non-linear age of information (AoI) metrics that can naturally quantify the freshness of information. To model the inherent heterogeneity of the IoT system, the non-linear aging functions are defined in terms of IoT device types and message content. To minimize AoI, the proposed resource management schemes allocate the limited communication resources considering AoI. In particular, the proposed centralized scheme enables the base station to learn the device types and to determine aging functions. Moreover, the proposed distributed scheme enables the devices to share the limited communication resources based on available information on other devices and their AoI. The convergence of the proposed distributed scheme is proved, and the effectiveness in reducing the AoI with partial information is analyzed. Furthermore, the proposed resource management schemes with different number of devices, activation probabilities, and outage probabilities are analyzed in terms of the average instantaneous AoI. Simulation results show that the proposed centralized scheme achieves significantly lower average instantaneous AoI when compared to simple centralized allocation without learning, while the proposed distributed scheme achieves significantly lower average instantaneous AoI when compared to random allocation. The results also show that the proposed centralized scheme outperforms the proposed distributed scheme in almost all cases, but the distributed approach is more viable for a massive IoT.
\end{abstract}

\begin{IEEEkeywords}
Age of Information, Internet of Things, Radio Resource Management
\end{IEEEkeywords}

\section{Introduction} \label{sec:intro}
\blfootnote{\noindent This research was supported by the Office of Naval Research (ONR) under MURI Grant N00014-19-1-2621.} The Internet of Things (IoT) is arguably the most important technology of the coming decade \cite{saad2019vision}. However, the effective operation of several IoT services, such as industrial monitoring \cite{indus}, health monitoring \cite{health}, drones \cite{mohadrone}, virtual reality \cite{chenvr}, and vehicular network \cite{vehicular}, requires timely and frequent communications. To maintain the proper performance of such diverse IoT applications, the base station (BS) must maintain the most relevant information gathered from the IoT devices at any given time.\\
\indent In addition to timely transmissions from the devices to the BS, another key challenge is to account for the distinctive characteristics of an IoT and its devices. One prominent property of an IoT is its massive scale as the number of devices greatly outnumbers the available communication resources \cite{saad}. Therefore, an appropriate allocation of the limited communication resources among numerous IoT devices is necessary for the deployment of an IoT and its services \cite{iotalloc}. Furthermore, the IoT exhibits a high heterogeneity in terms of device types, functions, messages, transmission requirements, and resource constraints \cite{minehetero}. The aforementioned IoT properties pose challenges for timely uplink transmission in an IoT. To ensure the performance of time-sensitive IoT applications despite the aforementioned challenges, a new information timeliness performance metric is needed as an alternative to conventional delay, reliability, and data rate.\\
\indent To evaluate the communication between the BS and the devices, the \textit{age of information} (AoI), which is a metric that can quantify the relevance and the freshness of the information, is used \cite{aoi1, bo2}. However, the AoI has different characteristics compared to delay \cite{aoi1}, because it explicitly considers packet generation time. The problem of AoI minimization in an IoT has unique challenges due to the characteristics of an IoT, including massive scale, limited communication resources, and IoT device heterogeneity. Largely, AoI minimization can be done in a centralized way or in a distributed way. However, a centralized AoI minimization approach is not always viable for an IoT, because the energy constrained IoT devices may not be able to communicate frequently with BS. On the other hand, a distributed AoI minimization approach may require extensive device-to-device communication and could perform worse than a centralized solution for some IoT scenarios. Therefore, both centralized and distributed AoI minimization must be investigated to compare their applicability and performances in an IoT.\vspace{-1mm}

\subsection{Existing Works}
A number of recent works studied the problem of AoI minimization in wireless networks \cite{bo2, aoi1, aoischedule1, aoischedule3, multihop, adhoc, aoimultiaccess, aoischedule2, aoicsi, bo3, aoischedule4, multiinfo, multisource, aoibackoff, aoicsma, aoisleep, qm3, mm1, mg1, mg11, aoiurllc, qm2, aoinoma, nonlin1, gc2020}. These studies use various approaches to minimize the AoI under different constraints and conditions. For instance, the works in \cite{bo2, aoi1, aoischedule1, aoischedule3, multihop, adhoc, aoimultiaccess, aoischedule2, aoicsi, bo3, aoischedule4, multiinfo, multisource, aoibackoff, aoicsma, aoisleep,qm3, mm1, mg1, mg11, aoiurllc, qm2, aoinoma, nonlin1} study a variety of scheduling policies for AoI minimization in different networks, including single hop broadcast network \cite{aoischedule1}, single-hop uplink communication \cite{aoischedule3}, multi-hop uplink communication \cite{multihop}, ad-hoc networks \cite{adhoc}, and ALOHA-like random access \cite{aoimultiaccess}. The authors in \cite{aoischedule2} and \cite{aoicsi} propose and analyze scheduling policies for the wireless networks with known and unknown channel state information. The works in \cite{aoi1}, \cite{aoischedule3}, \cite{adhoc}, and \cite{bo3} introduce effective scheduling policies to minimize the average AoI with network constraints, such as throughput requirement, physical constraint for sensing, spectrum sharing, and varying packet sizes. In \cite{adhoc}, \cite{aoischedule4}, and \cite{multiinfo}, the authors use online techniques, such as reinforcement learning to perform AoI-minimal scheduling. Moreover, the authors in \cite{multiinfo} and \cite{multisource} analyze the performance of user scheduling for minimizing the average AoI in presence of multiple sources of information and propose a hybrid queueing system. The authors in \cite{aoibackoff} analyze the coexistence of DSRC and WiFi networks as a game, in which the DSRC network minimizes the AoI and the WiFi network maximizes the throughput. For CSMA networks, the work in \cite{aoicsma} optimizes the backoff time of each communication link to minimize the total average AoI, and the authors in \cite{aoisleep} propose a sleep-wake scheduling to optimize the tradeoff between AoI minimization and energy consumption.\\
\indent The works in \cite{qm3, mm1, mg1, mg11} address the problem of AoI minimization using queueing-theoretic approaches. In \cite{qm3}, the authors analyze the peak AoI in a multi-class queueing system with packets having heterogeneous service times and requirements. The authors in \cite{mm1, mg1, mg11} derive closed-form solutions for the average AoI and the peak AoI for different queueing models, including M/M/1, M/G/1, and M/G/1/1. In \cite{aoiurllc}, the authors consider a vehicular network with ultra-reliable low-latency communication and minimize the tail of the AoI distribution. The peak AoI considering the packet delivery failure is analyzed in \cite{qm2}. Moreover, the non-orthogonal multiple access is compared against the conventional orthogonal multiple access in terms of AoI minimization in \cite{aoinoma}. The authors study the sampling policies to minimize the average AoI with the joint status sampling in IoT \cite{bo2} or with the non-linear aging functions \cite{nonlin1}.\\
\indent Despite being interesting, the existing solutions in \cite{bo2, aoi1, aoischedule1, aoischedule3, multihop, adhoc, aoimultiaccess, aoischedule2, aoicsi, bo3, aoischedule4, multiinfo, multisource, aoibackoff, aoicsma, aoisleep, qm3, mm1, mg1, mg11, aoiurllc, qm2, aoinoma, nonlin1, gc2020} do not consider some of the unique properties of an IoT, such as limited communication resources, massive scale, and high device heterogeneity. One of the key challenges in IoT is the massive scale of IoT coupled with highly limited available communication resources. However, the works in \cite{aoi1, aoischedule1, aoischedule3, multihop, adhoc, aoimultiaccess, aoischedule4}, and \cite{qm2, aoinoma, nonlin1} do not investigate the realistic IoT scenario in which the number of devices greatly outnumbers the communication resources. Furthermore, the inherent heterogeneity among IoT devices and the presence of non-linear aging functions are not considered in \cite{bo2, aoischedule1, aoischedule3, multihop, adhoc, aoimultiaccess, aoischedule2, aoicsi, bo3, aoischedule4}, and \cite{mm1, mg1, mg11, aoiurllc, qm2, aoinoma}. Moreover, most of the prior works for AoI minimization in \cite{aoi1, aoischedule1, aoischedule3, multihop, adhoc, aoimultiaccess, aoischedule2, aoicsi, multiinfo, multisource,aoibackoff, aoicsma, aoisleep,qm3, mm1, mg1, mg11}, and \cite{nonlin1} only considers a centralized approach. Moreover, in \cite{gc2020}, we studied centralized AoI minimization with non-linear aging functions and proposed a centralized resource allocation scheme to enable the BS to consider different aging functions. However, the work in \cite{gc2020} only investigates a centralized approach for AoI minimization and does not introduce a distributed resource allocation framework for an IoT with non-linear aging functions. A centralized approach may not always be suitable for an IoT, because the frequent communication with BS is not viable for the energy constrained IoT devices. These important challenges for enhancing the AoI in an IoT have been largely overlooked in prior works \cite{bo2, aoi1, aoischedule1, aoischedule3, multihop, adhoc, aoimultiaccess, aoischedule2, aoicsi, bo3, aoischedule4, multiinfo, multisource, aoibackoff, aoicsma, aoisleep, qm3, mm1, mg1, mg11, aoiurllc, qm2, aoinoma, nonlin1, gc2020}.
\subsection{Contributions}
\indent The main contributions of this paper are novel centralized and distributed resource allocation frameworks that can be used to minimize the average instantaneous AoI for a massive IoT with heterogeneous devices and non-linear aging functions. In particular, we capture the heterogeneity among IoT devices using \emph{non-linear aging functions}. Typically, the AoI is defined only in terms of time, and it is assumed to increase linearly with a slope of $1$ \cite{aoi1}. However, the definition of the AoI can be broader such that the AoI can be a function of completeness, validity, accuracy, currency, and utility \cite{nonlin2, nonlin3}. Under such a broader definition of the AoI, the aging function can be defined as an age penalty function or an age utility function \cite{nonlin1}, which can be an exponential, linear, or step function \cite{nonlin2}. As such, we propose to capture the heterogeneity among IoT devices and messages by assigning different aging functions based on the devices types, the IoT application, the message content, and the transmission requirement.\\
\indent For centralized AoI minimization, we propose a new priority scheduling scheme with a learning perspective such that the device types and the aging functions can be determined. For non-linear aging functions, we show that using the future AoI for priority scheduling achieves a lower average instantaneous AoI than using the current AoI. Simulation results for the centralized approach show that the proposed priority scheduling scheme achieves $26.7\%$ lower average instantaneous AoI with high activation probability and $31.7\%$ lower average instantaneous AoI with high outage probability than a simple priority scheduling. In particular, our approach outperforms a simple priority scheduling and performs similar to a priority scheduling with complete information on device types and aging functions.\\
\indent For the distributed AoI minimization, we formulate a minority game \cite{kolkata}, such that massive number of IoT devices can share the limited available communication resources autonomously. Furthermore, a payoff function is designed to allow the messages with the highest AoI to transmit first in a self-organizing manner. We then show the conditions that a resource allocation among IoT devices must satisfy to achieve a Nash equilibrium (NE). We propose a stochastic crowd avoidance algorithm for the resource allocation game and prove that the resource allocation using our proposed algorithm converges to an NE with sufficient information and under certain network parameters. Simulation results for the distributed case show that the proposed algorithm is effective in minimizing the AoI even if the devices only have the partial information about other devices. The results show that our game-based approach achieves $63.6\%$ lower average instantaneous AoI with limited information and $45.8\%$ lower average instantaneous AoI with high outage probability than a random resource allocation. Moreover, after convergence, our game-based approach performs similar to the pre-determined resource allocation with complete information.\\
\indent The centralized and the distributed AoI minimization schemes are compared in terms of overhead, implementation, and requirements. In particular, the centralized AoI minimization has an overhead of uplink communication request, while the distributed AoI minimization has an overhead of device-to-device communication. Simulation results show that the distributed AoI minimization achieves $40$-fold higher average instantaneous AoI than the centralized AoI minimization in a massive IoT, where the communication resources are highly limited and the devices only have partial information. In a less constrained IoT, simulation results show that the distributed AoI minimization achieves $8$-fold higher average instantaneous AoI than the centralized AoI minimization. Although the centralized AoI minimization outperforms the distributed AoI minimization in terms of average instantaneous AoI, the distributed approach may be more suitable for an IoT, because the centralized approach may not be practical or viable for an IoT. As such, our analysis clearly showcases the contrasts between the two solutions.\\
\indent The rest of this paper is organized as follows. Section II introduces the system model and the non-linear aging functions. Section III analyzes the AoI minimization with coexistence of linear and non-linear aging functions. Section IV presents the centralized and the distributed resource allocations in an IoT. Section V analyzes the simulation results, while Section VI draws conclusions.\vspace{-2mm}
\section{System Model}\label{sec:SM}
Consider the uplink of a wireless IoT system consisting of one BS serving $N$ IoT devices. The IoT devices can transmit their messages to the BS using the communication resources allocated by either a centralized or distributed resource allocation scheme. To transmit to the BS, the IoT devices use time-slotted orthogonal frequency-division multiple access (OFDMA). Here, $R$ time-frequency resource blocks (RBs) are allocated to the IoT devices at each time slot. If more than one IoT device use a given RB, none of the messages transmitted using the given RB can be successfully decoded, which leads to transmission failures. This implies that at most $R$ devices can transmit successfully to the BS at a time slot. In an IoT where the number of devices $N$ greatly outnumbers the number of RBs $R$, RB allocation is critical for the operation of IoT, and RB allocation can be done in a centralized or in a distributed way.\\
\indent Under a centralized resource allocation scheme, the BS allocates the RBs to the IoT devices such that a given RB is used by only one device. Therefore, duplicate RB usage will not occur when using a centralized resource allocation. However, centralized resource allocation incurs an overhead related to the need that the devices request their uplink communication resources via a random access channel (RACH) \cite{aoi3gpp}. Furthermore, the uplink communication resource request using RACH can fail resulting in transmission failure. In contrast, when using a distributed resource allocation scheme, the devices decide which RB to use autonomously without any intervention from the BS and without RACH. Although there is no overhead related to the need for requesting uplink communication resources, distributed resource allocation incurs an overhead related to the devices cooperating to avoid duplicate RB usage. Since there is no RACH request when performing distributed resource allocation, no RACH request failures will happen. However, the uplink transmission may fail because of a duplicate RB usage.\\
\indent For both centralized and distributed resource allocations, the uplink transmission can also fail because of the RB outage. The RB outage is based on the signal-to-noise ratio (SNR) such that the transmission is considered to be a failure if the SNR is less than a given threshold $\epsilon \geq 0$. We consider a stationary Rayleigh fading channel with additive white Gaussian noise (AWGN), such that the statistical properties of channel do not change over time. Therefore, the SNR outage probability is $\Pr\left(\sfrac{S^2}{\sigma^2} \leq \epsilon \right)$, where the received signal power $S^2$ is an exponentially distributed random variable and $\sigma^2$ is the variance of the AWGN. We assume that the IoT devices only know the distributional properties of the channel and the AWGN at the receiver. Furthermore, we assume that all devices transmit with the same transmit power as the devices do not know exact channel gain \cite{chenvr, iotalloc}, and \cite{aoisp1, aoisp2, aoisp3}. Since we consider a Rayleigh fading channel, the received signal power $S^2$ is exponentially distributed, and the devices know the mean $\lambda^{-1}$ of $S^2$. We assume that the transmit powers of all IoT devices are equal, and, thus, the mean of the received signal power will be the same for all devices. When an IoT device uses multiple RBs simultaneously, then the transmit power will be equally divided among those RBs. For instance, if an IoT device $i$ uses $R_{i,t}$ RBs simultaneously at time slot $t$, then the received signal power is $\sfrac{S^2}{R_{i,t}}$. For an IoT device $i$ using $R_{i,t}$ RBs simultaneously at time slot $t$, the outage probability $p_{i,t}$ for device $i$ at time slot $t$ will be: 
\begin{equation}
   p_{i,t} = \Pr\left(\frac{\sfrac{S^2}{R_{i,t}}}{\sigma^2} \leq \epsilon \right). \label{outp1}
\end{equation}
Since $S^2$ is exponentially distributed with mean $\lambda^{-1}$, $\frac{\sfrac{S^2}{R_{i,t}}}{\sigma^2}$ is exponentially distributed with mean $(R_{i,t} \sigma^2 \lambda)^{-1}$ for a given $R_{i,t}$ and a known $\sigma^2$. Moreover, the outage probability $p_{i,t}$ can be interpreted as a cumulative distribution function of an exponential distribution. Since an exponential random variable $S^2$ with mean $\lambda^{-1}$ has a cumulative distribution function of $\Pr\left(S^2 \leq \epsilon\right) = 1 - \exp\left(-\lambda \epsilon\right)$ for $\epsilon \geq 0$, the outage probability $p_{i,t}$ in \eqref{outp1} will be:
\begin{equation}
    p_{i,t} = 1 - \exp\left(-\left(R_{i,t} \sigma^2 \lambda\right)\epsilon\right). \label{outp2}
\end{equation}
For a successful uplink transmission, an RB must be used by only one device, and the SNR must be higher than a given threshold $\epsilon$.\\
\indent One prominent feature of an IoT is its massive scale. In particular, the number of IoT devices $N$ greatly outnumbers the number of RBs $R$. For an IoT scenario where $N > R$, the problem of RB allocation among the IoT devices becomes more challenging. Another prominent feature of an IoT is the heterogeneity among the IoT devices. The IoT devices are heterogeneous in terms of message types, transmission requirements, and packet content. A metric that can be used to determine which $R$ out of the $N$ devices will transmit and to quantify the freshness of the information in perspective of the destination is AoI. Furthermore, the heterogeneity among the IoT devices and their messages can be captured by extending the definition of AoI to include the quality of information and introducing non-linear aging functions.  
\subsection{Age of Information}
The AoI is a metric that quantifies the freshness of the information in the perspective of a destination \cite{bo2}, and the definition of the AoI is the time elapsed since the generation of a message that is most recently received at the BS. In prior art on the AoI, devices are commonly assumed to generate the messages at will \cite{gatwill2} and to update BS with a new message just in time \cite{jintime}. The generate-at-will model for AoI implies that IoT devices can have messages to transmit to the BS at any time, and the just-in-time model for AoI implies that IoT devices transmit new message to the BS immediately after the successful transmission of the current message. However, for an IoT, the devices are not always \textit{active} and do not always have the messages to transmit to the BS. In our model, a device has a message to transmit to the BS at a given time slot with an activation probability $v_a$. If a device transmits to the BS unsuccessfully at a given time slot, then the device retransmits the message immediately at following time slot without a random backoff time. For the distributed RB allocation, the proposed game is designed to give incentive to devices with low AoI to not transmit. The RACH phase of the centralized RB allocation can use the game formulated for distributed RB allocation to achieve a lower average instantaneous AoI. Meanwhile, the random backoff time used in \cite{aoibackoff} and \cite{backoff} does not consider different aging functions, when the backoff time is determined for each device. For instance, a long backoff time can be assigned to a device with higher AoI and exponential aging function, while a short backoff time can be assigned to a device with low AoI and linear aging function.\\ 
\indent Minimizing the AoI implies that the destination maintains fresh information from the source. However, minimizing the AoI is different from simply minimizing delay\cite{aoi1}. The AoI is measured by an aging function, and it is typically assumed that all devices have the same, linear aging function with a slope of $1$\cite{aoi1}. However, by using different aging functions for different messages, the AoI can naturally capture the inherent heterogeneity among IoT devices and messages. For instance, depending on the device type and the message characteristics, the aging function can be assigned appropriately. If the device is a simple sensor transmitting a time-insensitive update messages, the appropriate aging function is a linear aging function. On the other hand, if the device is an industrial monitoring sensor transmitting a time-sensitive status report, the appropriate aging function is an exponential aging function. By using different aging functions, the AoI captures both the freshness of information and the value of information \cite{nonlin1, nonlin2, nonlin3}.\\
\indent To model the heterogeneous messages, we consider the coexistence of linear aging function and exponential aging function. In particular, with a linear aging function, the AoI from IoT device $i$ at the beginning of time slot $t \in \mathbb{Z}_+$ is:
\begin{equation}
    a_i(t) = t - \delta_i(t), \label{af1}
\end{equation}
where $\delta_i(t)$ is time slot at which the most recent message from device $i$ received by the BS was generated. With an exponential aging function, the AoI from IoT device $i$ at the beginning of time slot $t \in \mathbb{Z}_+$ is:
\begin{equation}
    b_i(t) = 2^{t - \delta_i(t) - 1}. \label{af2}
\end{equation}
Although a specific linear and exponential aging functions are considered, our proposed centralized and distributed approaches to minimize the AoI are not limited to these aging functions only, and any type of aging function can be used.\\
\indent In a scenario where all IoT devices have the same linear aging function with slope $1$, the IoT devices and the BS can easily determine the AoI of the messages by simply counting the number of time slots passed since the most recently received message was generated \cite{aoi1}. However, in our system model where different aging functions coexist, the aging function of a given message is determined by the content of the message\cite{nonlin2}. For instance, the aging function of a message whose content is critical would be the exponential aging function $b_i(t)$, while the aging function of a message whose content is normal would be the linear aging function $a_i(t)$ \cite{nonlin1}. This implies that the BS cannot determine the aging function directly before receiving it, and, thus, the BS cannot compute the AoI of the messages. Therefore, the devices determine the aging function of their own message and compute the current AoI.\\
\indent To capture the heterogeneity among the IoT devices, the IoT devices are classified into different types based on the probabilistic properties of their messages. A typical IoT device would not always have a time-sensitive message to send to BS. Additionally, a device that usually sends time-insensitive messages may sometimes have a critical message to send. In our model, we consider two types of devices. Type $1$ devices are more likely to have linearly aging messages than exponentially aging messages. In other words, type $1$ devices have the messages with aging function $a_i(t)$ with probability $m_1$ and have messages with aging function $b_i(t)$ with probability $(1-m_1)$ with $1 > m_1 > 0.5$. Type $2$ devices are more likely to have exponentially aging messages than linearly aging messages. In other words, type $2$ devices have messages with aging function $b_i(t)$ with probability $m_2$ and have messages with aging function $a_i(t)$ with probability $(1 - m_2)$ with $1 > m_2 > 0.5$. We assume that the characteristics of devices types, such as $m_1$ and $m_2$, are known to the BS, but the BS does not know the type of a given device. Although having different types of messages realistically models the heterogeneity of the IoT devices, this makes the RB allocation more challenging, because the messages transmitted by a given device may have different aging functions. With non-linear aging functions and heterogeneous device types, the problem of AoI minimization is different from AoI minimization with only linear aging function and homogeneous devices. Hence, next we investigate the problem of non-linear AoI minimization with coexistence of different aging functions and heterogeneous devices.
\section{Non-linear AoI Minimization}
\indent To minimize the average instantaneous AoI, the devices with highest AoI are permitted to transmit to the BS, and the AoI of different devices are compared to decide which devices are allocated the RBs in a massive IoT. Without coexistence of different aging functions, comparing AoI from different devices to allocate the limited RBs is simple. If all devices have same linear aging function $a_i(t)$, then $a_i(\tau) > a_h(\tau)$ implies $a_i(\tau + \beta) > a_h(\tau + \beta)$ for any positive integers $\tau$ and $\beta$ given that devices $i$ and $h$ do not transmit successfully to the BS. Therefore, in a massive IoT with $N > R$, comparing current AoI with $t = \tau$ at time slot $\tau$ can be used to decide which $R$ out of $N$ devices transmit and to minimize the average instantaneous AoI. Furthermore, using current AoI with $t = \tau$ or using future AoI with $t = \tau+\beta$ at time slot $\tau$ is equivalent, when all devices have same linear aging function $a_i(t)$.\\ 
\indent When a linear aging function $a_i(t)$ and an exponential aging function $b_i(t)$ coexist, it is insufficient to only compare the AoI of different devices to minimize the average instantaneous AoI, and RBs must be allocated considering the aging functions to minimize the average instantaneous AoI. For instance, even if a device $i$ with $b_i(t)$ has lower AoI than a device $h$ with $a_h(t)$, device $i$ will eventually have higher AoI than device $h$, because $b_i(t)$ increases faster than $a_h(t)$. Therefore, the AoI comparison must take aging functions into account, and one way to consider the aging functions is to compare the future AoI.\\
\indent The optimization problem to minimize the average instantaneous AoI at time slot $\tau$ is:
\begin{align}
    \min_{\boldsymbol{n}}& \ \  \frac{1}{N}\left(\sum_{i \in \boldsymbol{n}} z_{i}'(\tau) + \sum_{i \not \in \boldsymbol{n}, i \in \boldsymbol{N}} z_{i}(\tau)\right)\\
    \textrm{s.t.}  &\ \ \boldsymbol{n} \subseteq \boldsymbol{N},\\
                   &\ \ |\boldsymbol{n}| = R,
\end{align}
where $\boldsymbol{N} = \{1, \cdots, N\}$ is set of all devices, $z_{i}'(\tau)$ is aging function of device $i$ such that $\delta_i(\tau) \neq \delta_i(\tau-1)$ is updated, and $z_{i}(\tau)$ is aging function of device $i$ such that $\delta_i(\tau) = \delta_i(\tau-1)$. $\boldsymbol{n}$ is a set of all devices allocated the RBs at $(\tau-1)$, while $\boldsymbol{N} - \boldsymbol{n}$ is a set of all devices not allocated the RBs at $(\tau-1)$. Next, for the problem of instantaneous AoI minimization, we prove that performing RB allocation based on the future AoI achieves a lower average instantaneous AoI than an alternative that is based on the current AoI.
\begin{proposition} \label{pp1}
\normalfont In a massive IoT where $N > R$, if there are different aging functions $a_i(t)$ and $b_i(t)$, comparing the future AoI with $t = \tau + \beta$ for some positive integer $\beta$ at time slot $\tau$ to determine the RB allocation achieves a lower average instantaneous AoI than comparing the current AoI with $t = \tau$.
\end{proposition}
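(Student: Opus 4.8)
The plan is to recast the scheduling objective as a greedy selection and then to compare the two allocations through the \emph{savings} they realize at the future slot $\tau+\beta$. Writing the cost of an allocation $\boldsymbol{n}$ evaluated at a time $t$ as $C(\boldsymbol{n},t)=\sum_{i\in\boldsymbol{n}}z_i'(t)+\sum_{i\notin\boldsymbol{n},\,i\in\boldsymbol{N}}z_i(t)=\sum_{i\in\boldsymbol{N}}z_i(t)-\sum_{i\in\boldsymbol{n}}\bigl(z_i(t)-z_i'(t)\bigr)$, the first sum is independent of $\boldsymbol{n}$, so minimizing $C(\boldsymbol{n},t)$ over $|\boldsymbol{n}|=R$ is equivalent to selecting the $R$ devices with the largest saving $s_i(t)=z_i(t)-z_i'(t)$. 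Hence the current-AoI rule selects the top-$R$ devices by $s_i(\tau)$, the future-AoI rule selects the top-$R$ by $s_i(\tau+\beta)$, and I would prove the proposition by establishing $C(\boldsymbol{n}_f,\tau+\beta)\le C(\boldsymbol{n}_c,\tau+\beta)$, where $\boldsymbol{n}_c$ and $\boldsymbol{n}_f$ denote the sets selected by the two rules.

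First I would compute the saving for each aging type, using that a served device transmits at slot $\tau-1$ so its generation time resets to $\delta=\tau-1$. For a linear device $h\in\boldsymbol{n}$ this gives $z_h'(\tau+\beta)=\beta+1$, so $s_h(\tau+\beta)=a_h(\tau+\beta)-(\beta+1)=a_h(\tau)-1$, which is \emph{independent of} $\beta$. For an exponential device $i$ the reset gives $z_i'(\tau+\beta)=2^{\beta}$, so $s_i(\tau+\beta)=2^{(\tau+\beta)-\delta_i-1}-2^{\beta}=2^{\beta}\bigl(b_i(\tau)-1\bigr)$; that is, the saving of an exponentially aging device is amplified by the factor $2^{\beta}$ as the horizon grows. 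Taking $\beta=0$ recovers $s_h(\tau)=a_h(\tau)-1$ and $s_i(\tau)=b_i(\tau)-1$, which is exactly the current-AoI comparison, so the two rules differ only through this $2^{\beta}$ amplification.

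Next I would localize the disagreement and close it with an exchange argument. With only linear functions every saving is $\beta$-independent, the two orderings coincide, $\boldsymbol{n}_c=\boldsymbol{n}_f$, and the rules are equivalent, matching the observation made just before the optimization problem. When $a_i$ and $b_i$ coexist the orderings can reverse: for $b_i(\tau)>1$ there is a regime $b_i(\tau)<a_h(\tau)$ with $2^{\beta}\bigl(b_i(\tau)-1\bigr)>a_h(\tau)-1$ once $\beta$ is large enough, so the current rule serves the linear device $h$ whereas the future rule serves the exponential device $i$. If $\boldsymbol{n}_c\neq\boldsymbol{n}_f$, pick such a pair $h\in\boldsymbol{n}_c\setminus\boldsymbol{n}_f$ and $i\in\boldsymbol{n}_f\setminus\boldsymbol{n}_c$; since the commonly served devices cancel, the swap yields $C(\boldsymbol{n}_f,\tau+\beta)-C(\boldsymbol{n}_c,\tau+\beta)=s_h(\tau+\beta)-s_i(\tau+\beta)=\bigl(a_h(\tau)-1\bigr)-2^{\beta}\bigl(b_i(\tau)-1\bigr)<0$. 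Iterating this single swap transforms $\boldsymbol{n}_c$ into $\boldsymbol{n}_f$ without ever increasing $C(\cdot,\tau+\beta)$, which gives the claimed strictly lower average instantaneous AoI and captures the intuition that leaving a fast-aging device unserved is far costlier in the future than leaving a slow-aging one unserved.

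The step I expect to be the main obstacle is the bookkeeping around the reset and the evaluation instant. Because the current-AoI rule is by construction optimal for the instantaneous AoI at $\tau$ itself, I must be explicit that the improved quantity is the AoI at the future slot $\tau+\beta$, the horizon at which the penalty of an exponentially aging message materializes, and I must verify the regrowth values $z_i'(\tau+\beta)$ so that the savings $s_i(\tau+\beta)$ and their cancellation in the exchange step are exact rather than heuristic. I would also need to exhibit a single $\beta$ that produces the reversal simultaneously for every contested pair, so that the multi-swap induction never reintroduces a pair whose ordering fails to reverse.
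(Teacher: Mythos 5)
There is a genuine gap, and it is one you partially sense in your final paragraph but do not resolve: your argument is circular with respect to the quantity the proposition actually measures. You define $\boldsymbol{n}_f$ as the top-$R$ set under the savings $s_i(\tau+\beta)$, which by your own decomposition is precisely the minimizer of $C(\cdot,\tau+\beta)$; the inequality $C(\boldsymbol{n}_f,\tau+\beta)\le C(\boldsymbol{n}_c,\tau+\beta)$ that your exchange argument establishes is therefore true by construction and carries no content. It is exactly symmetric to the equally true statement $C(\boldsymbol{n}_c,\tau)\le C(\boldsymbol{n}_f,\tau)$, so snapshot dominance at one's own evaluation instant cannot decide which rule is better. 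The proposition's metric is not the snapshot AoI at $\tau+\beta$ under a one-shot allocation: as the paper states immediately after the proposition, the average instantaneous AoI is computed from the \emph{current} AoI recorded at the slots of successful transmission, with the future AoI used only to pick winners. Proving the claim requires following both policies forward in time, accounting for the fact that the device denied the RB at $\tau$ transmits later and its AoI keeps growing under its own aging function until then.

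That forward accounting is the substantive content of the paper's proof, and it never appears in yours. The paper reduces to a pairwise comparison ($N=2$, $R=1$, one linear device $i$, one exponential device $h$), notes that the two rules only disagree when $a_i(\tau)>b_h(\tau)$ but $a_i(\tau+\beta)<b_h(\tau+\beta)$, and then compares the realized averages of AoI at the two transmission instants: serving $i$ first gives $\tfrac{1}{2}\bigl(a_i(\tau)+2^{\beta}b_h(\tau)\bigr)$, serving $h$ first gives $\tfrac{1}{2}\bigl(a_i(\tau)+\beta+b_h(\tau)\bigr)$, and the former is strictly larger because $(2^{\beta}-1)\,b_h(\tau)>\beta$ in the disagreement regime (where $b_h(\tau)$ cannot be small enough to violate this, e.g.\ $b_h(\tau)>1$ when $\beta=1$). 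Your savings computations ($s_h(\tau+\beta)=a_h(\tau)-1$ for linear, $s_i(\tau+\beta)=2^{\beta}(b_i(\tau)-1)$ for exponential) are correct and your identification of the disagreement regime matches the paper's, so the ingredients are nearby; but to close the proof you must replace the snapshot comparison with the comparison of deferred-transmission costs, i.e.\ derive the inequality $(2^{\beta}-1)\,b_h(\tau)>\beta$ and verify it holds whenever the two rules disagree.
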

\begin{proof} See Appendix \ref{app1}. \end{proof}
\indent From Proposition \ref{pp1}, we observe that allocating RBs to the devices with highest future AoI results in a lower overall average instantaneous AoI of all devices at all time slots than allocating RBs to the devices with highest current AoI. It is important to note that the current AoI at the time slot of successful transmission is used to compute the average instantaneous AoI, and the future AoI is only used to determine the RB allocation. In Proposition \ref{pp1}, the value of the positive integer $\beta$ in determining which future AoI to use for the RB allocation is a design parameter, and different values of $\beta$ have different effects. If higher values of $\beta$ are used, devices with exponentially aging messages are more likely to be allocated the RBs than devices with linearly aging messages. This implies that devices with exponentially aging messages are allocated the RBs even when their current AoI is low, while devices with linearly aging messages are not allocated the RBs even when their current AoI is high. Therefore, the exponentially aging messages are being transmitted before the linearly aging messages. In other words, with higher values of $\beta$, the average instantaneous AoI of IoT devices with exponentially aging messages is lower, while the average instantaneous AoI of IoT devices with linearly aging messages is higher. Furthermore, using higher values of $\beta$ for the future AoI does not necessarily achieve a lower average instantaneous AoI, and $\beta$ can be chosen depending on how much the exponentially aging messages are prioritized over the linearly aging messages. In our system model, the centralized and the distributed RB allocations use future AoI with $\beta = 1$.\\ 
\indent We consider that the IoT devices may have messages requiring multiple RBs to successfully transmit to the BS. When the messages take multiple RBs to successfully transmit, $\delta_i(t)$ in \eqref{af1} and \eqref{af2} will represent the time slot during which the most recent message from device $i$, which is fully received by BS, was generated. In this case, the devices must determine how to transmit the messages taking multiple RBs. In particular, if a device $i$ has a message that requires $n_i$ RBs to transmit, then the message may be transmitted simultaneously by using $n_i$ RBs at a given time slot, consecutively by using $1$ RB each time for $n_i$ time slots, or jointly by using both simultaneous and consecutive transmissions. The simultaneous transmission may complete the transmission at once reducing the AoI, but it has high $R_{i,t}$ and outage probability $p_{i,t}$ \eqref{outp2}. A consecutive transmission achieves its lowest outage probability $p_{i,t}$ with $R_{i,t} = 1$, but it takes the largest number of time slots to completely transmit increasing the AoI. Therefore, the problem of AoI minimization must consider the outage probability. The optimization problem to minimize the average instantaneous AoI for a device $i$ with a linearly aging message requiring $n_i$ RBs at time slot $\tau$ is:
\begin{align}
    \min_{\boldsymbol{R}}& \ \  a_i\left(\tau + \sum_{j = \tau}^{\tau + n_i - 1} (1 - p_{i,j})^{-1}\right), \label{opt}\\
    \textrm{s.t.}  &\ \ \sum\nolimits_{j = \tau}^{\tau + n_i - 1} R_{i,j} = n_i,\\
                   &\ \ 0 \leq R_{i,j} \leq R \ \forall \ j,
\end{align}
where $\boldsymbol{R} = [R_{i,\tau}, ..., R_{i, \tau+n_i-1}]$. Since $R_{i, \tau} \in \mathbb{Z}_+$, $|\boldsymbol{R}| = n_i$, and $n_i$ is typically small for the IoT devices \cite{iotshort}, the solution space of optimization problem \eqref{opt} is finite and small. Therefore, the optimization problem can be solved easily using any discrete optimization method, such as combinatorial optimization. In particular, for any $n_i$, the optimization problem \eqref{opt} can be mapped as a directed graph with $(1 - p_{i,j})^{-1}$ as weights to find a shortest path. When a device uses $R_{i,t}$ RBs simultaneously, $(1 - p_{i,t})$ is the probability of successful transmission given that duplicate RB selection did not occur. Taking $(1 - p_{i,t})$ as the success probability in a geometric distribution, the expected number of time slots needed for the successful transmission when using $R_{i,t}$ RBs simultaneously is $(1 - p_{i,t})^{-1}$, which is the mean of the geometric distribution. Therefore, $\sum_{j = \tau}^{\tau + n_i - 1} (1 - p_{i,t})^{-1}$ is the expected number of time slots needed to transmit a message requiring $n_i$ RBs. If a device $i$ has an exponentially aging message, $b_i(t)$ replaces $a_i(t)$ in \eqref{opt}. The solution to the optimization problem \eqref{opt} determines the number of RBs $R_{i,\tau}$ that a device $i$ should be allocated with at time slot $\tau$ so that the average instantaneous AoI is minimized. Furthermore, the solution to the optimization problem is used for centralized and distributed approaches for the RB allocation. 
\section{Resource Block Allocation}
In a massive IoT with $N > R$, RB allocation is a challenging problem especially given the high heterogeneity among IoT devices and the messages. RB allocation in an IoT can be done in a centralized way or in a distributed way. Moreover, RB allocation schemes can achieve a lower average instantaneous AoI by allocating the limited RBs to IoT devices with higher future AoI. Centralized RB allocation scheme is based on a priority scheduling improved with maximum likelihood to determine the aging functions and to learn the device types. The proposed distributed RB allocation scheme in Algorithm 1 is designed to enable only the devices with sufficiently high future AoI to transmit, and the proposed stochastic crowd avoidance algorithm is proved to converge to an NE of the formulated game.
\subsection{Centralized RB Allocation}
For centralized RB allocation, the BS allocates the RBs to the IoT devices. In a time slot. the centralized approach has two phases. The active devices request for the RB using RACH in the first phase. If an active device is allocated an RB, the active device transmits its message to the BS in the second phase. Although there will be no duplicate RB usage causing a transmission failure in the second phase, there may be RACH preamble collision causing an RB request failure in the first phase. Therefore, using centralized RB allocation scheme, a device fails to transmit to the BS because of the RACH preamble collision, the outage based on SNR, and the lack of RB allocation.\\
\indent We let $P$ be the number of RACH preambles and $N_t$ be the number of active devices at time slot $t$. The probability of the RACH preamble collision $c_t$ at time slot $t$ is:
\begin{equation}
    c_t = 1 - \left(\frac{P-1}{P}\right)^{N_t-1}, \label{prefail}
\end{equation}
which is the probability of more than one active device using a given RACH preamble. If a device fails to request for an RB at time $t$ with probability $c_t$, then this is equivalent to a transmission failure. However, if a device $i$ successfully requests for an RB at time $t$, IoT device $i$ sends the information about its current AoI $C_i$ and the necessary number of RBs $R_{i,t}$ at time slot $t$. After gathering the information from the devices, the BS determines the RB allocation based on the future AoI of devices to minimize the average instantaneous AoI.\\
\indent In the second phase, the BS allocates the RBs to the active devices using a priority scheduling based on the future AoI. The priority scheduling allocates the $R$ RBs to at most $R$ active devices with the highest future AoI, minimizing the average instantaneous AoI. If a device $i$ with high future AoI has $R_{i,t} > 1$, then IoT device $i$ may be allocated more than $1$ RB at time slot $t$. The problem in the second phase is determining the future AoI using the received current AoI because of the coexistence of different aging functions. In other words, the BS does not know the aging function of an active device $i$, and the BS must determine the aging function to compute the future AoI $F_i$, which is used for the priority scheduling scheme to achieve a lower average instantaneous AoI as shown in Proposition \ref{pp1}. However, the BS can determine the current aging function of an active device $i$ from $C_i$.\\
\indent The BS can determine that the aging function of an active device $i$ is $a_i(t)$, when $C_i$ cannot be derived using $b_i(t)$. The possible values of AoI using $a_i(t)$ are $\{1, 2, 3, 4, ...\}$, and the possible values of the AoI using $b_i(t)$ are $\{1, 2, 4, 8, ...\}$. Therefore, the values of AoI that are only possible using $a_i(t)$ are $\{3, 5, 6, 7, ...\}$. If the received current age $C_i$ from an active device $i$ is one of $\{3, 5, 6, 7, ...\}$, then the aging function is $a_i(t)$, and, thus, the future AoI $F_i$ of device $i$ is $C_i + 1$. The BS can also determine the aging functions of active devices by using regression. For an active device $i$, the BS can use the AoI from the most recently received uplink request from device $i$ and $C_i$ to determine if the aging function is $a_i(t)$ or $b_i(t)$. After determining the aging functions of active devices, the BS can compute the future AoI $F_i$ of active devices and learn the device types, which are used for the priority scheduling scheme with learning. The BS can use either of the two methods to accurately determine the current aging functions, but it is not always possible to use these methods. When both methods are not possible to use in a given time slot due to RACH preamble collision, the BS uses the expected value of $F_i$ for the priority scheduling scheme.\\
\indent The BS can compute the expected value of $F_i$ of an active device $i$ by learning the device type of device $i$. The BS can learn the type of a device $i$ by using the previous data from the instances that the BS was able to determine the aging function of messages from device $i$. In particular, the BS can use a maximum likelihood to determine the device types. We let $\mathcal{S}$ be a set of all device types and $\boldsymbol{O}_i$ be a vector of aging functions that a device $i$ had that the BS was able to determine exactly. Furthermore, we let $k_{i,f}$ be the number of times that the BS determined device $i$ to have aging function $f$. Assuming that the aging functions of a device $i$ are determined independently, the learned type $H_i \in \mathcal{S}$ of a device $i$ is:
\begin{align}
    H_i &= \argmax\limits_{s \in \mathcal{S}} \Pr(\boldsymbol{O}_i \mid s) = \argmax\limits_{s \in \mathcal{S}} \prod\limits_{f\in\mathcal{F}} {\Pr(f \mid s)}^{k_{i,f}},\\
    &= \argmax\limits_{s \in \mathcal{S}} \sum\limits_{f\in\mathcal{F}} {k_{i,f}}\ln(\Pr(f\mid s)), \label{ml}
\end{align}
where $\mathcal{F}$ is a set of all aging functions. For our model, the values of $\Pr(f \mid s)$ for any $f \in \mathcal{F}$ and $s \in \mathcal{S}$ are known. In particular, $\Pr(a_i \mid s = 1) = m_1$, $\Pr(b_i \mid s = 1) = 1 - m_1$, $\Pr(a_i \mid s = 2) = 1 - m_2$, and $\Pr(b_i \mid s = 2) = m_2$. Therefore, the maximum likelihood in \eqref{ml} can be solved by the BS directly.\\
\indent Once the device type of a device $i$ is learned, the expected future AoI $\mathbb{E}[F_i]$ of device $i$ can be computed. For our model, if $H_i = 1$, then the expected future AoI $\mathbb{E}[F_i\mid H_i = 1]$ is:
\begin{equation}
    \mathbb{E}[F_i \mid H_i = 1] = m_1(C_i + 1) + (1 - m_1)(2C_i).
\end{equation}
If $H_i = 2$, then the expected future AoI $\mathbb{E}[F_i\mid H_i = 2]$ is:
\begin{equation}
    \mathbb{E}[F_i \mid H_i = 2] = (1 - m_2)(C_i + 1) + m_2(2C_i).
\end{equation}
The expected value of $F_i$ is used for the priority scheduling scheme only if the exact value of $F_i$ cannot be determined.\\
\indent Priority scheduling determines the RB allocation among the active IoT devices based on the future AoI $F_i$. In a time slot $\tau$, the RBs are allocated first to the devices with the highest $F_i$. Furthermore, for a device $i$ with highest $F_i$ at time slot $\tau$, the number of RBs allocated to device $i$ is $R_{i, \tau}$. If some of the active devices have same $F_i$, then the RBs are allocated first to the devices whose type is more likely to have a faster aging function. For instance, if a device $i$ is device type $1$, a device $j$ is device type $2$, and $F_i = F_j$, then device $j$ has a priority over device $i$, because device $j$ is more likely to have exponentially aging messages. The RBs are allocated to the active devices until all RBs are allocated or all active devices are allocated the RBs.\\
\begin{algorithm}[t]
\caption{Priority scheduling based on future AoI at time slot $t$.}
\begin{algorithmic}[1]\label{centalg}\vspace{1mm}
\item[1 :] \hspace{0.0cm}  Receive values $C_i$ and $R_{i,t}$, and initialize $R$.
\item[2 :] \hspace{0.0cm}  Compute $F_i$ or $\mathbb{E}[F_i]$ for each $C_i$.
\item[3 :] \hspace{0.0cm}  {\bf for} $j = 1, 2, \cdots$
\item[4 :] \hspace{0.3cm}		$\mathcal{Z}_1 \leftarrow$ set of type $1$ devices with $j$-th highest value among $F_i$.
\item[5 :] \hspace{0.3cm}		$\mathcal{Z}_2 \leftarrow$ set of type $2$ devices with $j$-th highest value among $F_i$.
\item[6 :] \hspace{0.3cm}		{\bf for all $i \in \mathcal{Z}_2$}
\item[7 :] \hspace{0.6cm}		    {\bf if} $R > 0$, 
\item[8 :] \hspace{0.9cm}		        Allocate $\textrm{min}(R, R_{i,t})$ RBs to device $i$.
\item[9 :] \hspace{0.9cm}              $R \leftarrow R - \textrm{min}(R, R_{i,t})$. {\bf end if}
\item[10 :] \hspace{0.3cm}		{\bf end for}
\item[11 :] \hspace{0.3cm}		{\bf for all $i \in \mathcal{Z}_1$}
\item[12 :] \hspace{0.6cm}		    {\bf if} $R > 0$, 
\item[13 :] \hspace{0.9cm}		        Allocate $\textrm{min}(R, R_{i,t})$ RBs to device $i$.
\item[14 :] \hspace{0.9cm}              $R \leftarrow R - \textrm{min}(R, R_{i,t})$. {\bf end if}
\item[15 :] \hspace{0.3cm}		{\bf end for}
\item[16:] \hspace{0.0cm}  {\bf end for} 
\end{algorithmic}\vspace{1mm}
\end{algorithm}
\indent One of the major problems with priority scheduling is the infinite blocking of low-priority tasks. However, since the priority depends on the future AoI, the priority of low-priority messages increases with time. Therefore, regardless of the aging function or the device type, the messages will eventually be allocated the RBs to transmit to the BS. One of the limitations of centralized RB allocation scheme is the overhead related to the RB request via RACH. With the higher number of active devices $N_t$ at time $t$, the probability of the RACH preamble collision $c_t$ becomes significant, and, thus, more transmission failures occur. Furthermore, with centralized RB allocation scheme, a frequent communication between the devices and the BS is required, which may not be viable for the IoT devices \cite{nocent}. However, the main advantage of centralized RB allocation scheme is that the RBs are fully utilized. 
\subsection{Distributed RB Allocation}
Distributed RB allocation enables the devices to allocate the RBs in a self-organizing manner without any intervention from the BS. Since a duplicate RB selection results in transmission failures, the active devices must choose the RBs such that no other device is choosing the same RB. Furthermore, the distribution RB allocation can be modeled as one-to-one association between the RBs and the devices. The behavior of the devices wanting to choosing an RB alone can be formulated as a minority game \cite{iotkolkata}.\\
\indent A suitable minority game for distributed RB allocation in an IoT is the Kolkata paise restaurant (KPR) game \cite{kolkata}. The KPR game is a repeated game in which the customers simultaneous go to one of the restaurants, which can only serve one customer each. Additionally, the cost of going to a restaurant is the same for all restaurants, and a customer can only go to one restaurant at any given time. In the KPR game, the players are the customers, whose action in each iteration is to choose one of the restaurants. The payoff of a given player depends on the utility of the chosen restaurants and the number of players choosing the same restaurant. Given players, actions, and payoffs in a game, one important stable solution is an NE. A vector of actions is an NE if no player can achieve a higher payoff by a unilateral change of action. For the KPR game, the existence of an NE depends on the utilities of the restaurants \cite{kolkata}, and an NE is when all customers go to different restaurants and none of the customers have the utility of $0$. If an NE exists in the KPR game, it coincides with the socially optimal solution, which is when all restaurants are being utilized.\\
\indent The fundamental structure of the KPR game can be readily extended for our IoT model. The customers can be modeled as IoT devices, and the restaurants can be modeled as the RBs. Furthermore, the cost of using any of the RBs is same. However, there are significant differences between the KPR game and the IoT game. In the KPR game, the number of customers and the number of restaurants are the same, and each customer goes to one of the restaurants at every iteration. In the IoT game, the number of devices $N$ and the number of RBs $R$ may not be the same, and not all devices are active and need to use the RBs at each time slot. The most significant difference is the payoff in the case of duplicate RB selection. When multiple customers choose the same restaurant in the KPR game, one of those customers is randomly chosen to get the full payoff, while other customers with duplicate selection get a zero payoff. However, in the IoT game, all devices that choose the same RB get a zero payoff because of the transmission failures.\\ 
\indent For our AoI minimization, the players are the $N$ IoT devices, and their action is to transmit using the RBs or not to transmit. We let $\mathcal{R}$ be the set of $R$ RBs and let $x_i(t)$ be the action of device $i$ at time slot $t$. If $x_i(t) \in \mathcal{R}$, then device $i$ transmits using the RB in $x_i(t)$ at time slot $t$. If $x_i(t) = 0$, then device $i$ does not transmit at time slot $t$. at each time slot, the payoff of each device depends on the actions of all devices. If a device transmits successfully by using an RB alone, then the payoff is $\rho$. Furthermore, a successful transmission using any of the RBs has the same payoff of $\rho$. If a device transmits unsuccessfully due to a duplicate RB usage, then the payoff is $-\gamma$. The transmission failure has a negative payoff, because the energy is consumed for the transmission without success. Additionally, the transmission failure using any of the RBs has the same payoff of $-\gamma$, and $\rho$ and $\gamma$ are positive numbers such that $\rho > \gamma$.\\ 
\indent To minimize the AoI of the devices, active devices with lower AoI must not transmit, while the active devices with high AoI need to transmit. Using a distributed RB allocation scheme, the devices must know the AoI of other devices to determine if their own AoI is high enough to transmit. We assume that the active devices broadcast their own future AoI $F_i$ to other devices within the communication range $r_c$, and the communication resource for this broadcast is pre-allocated. Moreover, we assume that device-to-device communication links are orthogonal to the uplink communication as done in \cite{iotd2d1, iotd2d2, iotd2d3, iotd2d4, iotd2d5}. Similar to the overhead related to the RACH uplink request for centralized RB allocation scheme, the communication between the devices to share the AoI can be seen as the overhead for distributed RB allocation scheme. However, depending on $r_c$, the active devices may not know $F_i$ of all other active devices. We let $\alpha_i$ be the active status of a device $i$ such that $\alpha_i = 0$ implies that device $i$ is inactive and $\alpha_i = 1$ implies that device $i$ is active. We let $\boldsymbol{A}$ be a vector that captures the future AoI $F_i$ of all active devices, and $\boldsymbol{A}_i$ be a vector of the future AoI $F_i$ of the active devices within $r_c$ of an active device $i$. With $r_c$ sufficiently large, $|\boldsymbol{A}_i| = |\boldsymbol{A}|$ for all devices, where $|\boldsymbol{A}|$ is the cardinality of $\boldsymbol{A}$. For an active device $i$, if $F_i$ is higher than $\kappa$-th highest AoI in $\boldsymbol{A}_i$, then the active device $i$ transmits. $\kappa$ determines if $F_i$ is sufficiently higher than the future AoI of other active devices, and we let $\boldsymbol{A}_i(\kappa)$ be the $\kappa$-th highest AoI in $\boldsymbol{A}_i$. Moreover, $\kappa$ should ensure that the number of transmitting devices $T_t$ at time slot $t$ is equal to $R$ such that all transmitting devices can be allocated with the RB. If $T_t$ is higher than $R$, then there are at least two active devices with transmission failure, which causes the average instantaneous AoI to increase. If $T_t$ is less than $R$, then there are some of the RBs not used by any of the active devices, which may cause the average instantaneous AoI to increase. However, $T_t = R$ is difficult to achieve when $r_c$ is not sufficiently large and $|\boldsymbol{A}_i| < |\boldsymbol{A}|$.\\ 
\indent When $|\boldsymbol{A}_i| = |\boldsymbol{A}|$, the devices have full information on the future AoI $F_i$ of the active devices. Moreover, the devices know all active devices. Since the devices have full information of $F_i$, an active device $i$ transmits if $F_i \geq \boldsymbol{A}_i(R)$, and an active device $i$ does not transmit if $F_i < \boldsymbol{A}_i(R)$. Therefore, $\kappa = R$ when $|\boldsymbol{A}_i| = |\boldsymbol{A}|$ for all $i$. This ensures that $R$ active devices transmit, while $|\boldsymbol{A}| - R$ devices do not transmit. Therefore, the number of transmitting devices $T_t$ at time slot $t$ is equal to $R$. To formulate this decision to transmit or not to transmit into the payoff, the payoff $y_{\textrm{full}}$ when an active device $i$ does not transmit is: \vspace{-2mm}
\begin{multline}
    y_{\textrm{full}}(\boldsymbol{A}_i) = (\rho + \eta) \theta_{\mathbb{R}_{+}}(\boldsymbol{A}_i(R) - F_i-\eta)\\ 
    - (\gamma + \eta) \theta_{\mathbb{R}_{+}}(F_i - \boldsymbol{A}_i(R)), \label{payoff1}
\end{multline}
where $\eta$ is a real number in $(0, 1)$ and $\theta_{\mathbb{R}_{+}}$ is an indicator function such that:
\begin{equation}
    \theta_{\mathbb{R}_{+}}(x)=\left\{
                \begin{array}{ll}
                  1 \hfill &\text{if} \ x \in [0, \infty),\\
                  0 \hfill &\text{if} \ x \not\in [0, \infty).
                \end{array}
              \right.
\end{equation}
The function of $\eta$ in payoff functions is to ensure that the payoff functions are used as intended and only appropriate indicator function is activated. It is important to note that $y_{\textrm{full}}(\boldsymbol{A}_i)$ does not depend on actions of the players, because the decision to transmit or not to transmit only depends on future AoI. Given the payoff in \eqref{payoff1}, the $R$ active devices with $F_i \geq \boldsymbol{A}_i(R)$ transmit, because their payoff of not transmitting is $-(\gamma + \eta)$. The $|\boldsymbol{A}| - R$ active devices with $F_i < \boldsymbol{A}_i(R)$ do not transmit, because their payoff of not transmitting is $\rho + \eta$. Therefore, IoT devices with $R$ highest future AoI transmit, while other devices do not transmit.\\
\indent In a more realistic scenario where IoT devices do not have full information of $F_i$, $|\boldsymbol{A}_i| < |\boldsymbol{A}|$, and the devices do not know all active devices. It is difficult to make only $R$ active devices to transmit at time slot $t$, and, thus, $\kappa$ is designed to make $T_t \approx R$. For the case of $|\boldsymbol{A}_i| < |\boldsymbol{A}|$, $\kappa$ is:
\begin{equation}
    \kappa = \left\lceil\frac{R}{|\boldsymbol{A}|} |\boldsymbol{A}_i|\right\rceil, \label{kappa1}
\end{equation}
where $\kappa = R$ in the case of $\boldsymbol{A} = \boldsymbol{A}_i$. However, with partial information, the active devices do not know $\boldsymbol{A}$ in \eqref{kappa1}. $|\boldsymbol{A}|$ is the number of active devices, and the expected number of newly active devices is $Nv_a$. However, with previously active devices yet to transmit successfully, $|\boldsymbol{A}|$ is typically greater than $Nv_a$. Therefore, in \eqref{kappa1}, $|\boldsymbol{A}|$ can be estimated with $N v_a \zeta$, where $\zeta$ is a design parameter to consider the number of previously active devices yet to transmit successfully. With higher values of $\zeta$, the number of transmitting devices $T_t$ at time slot $t$ is smaller, while $T_t$ is bigger with smaller values of $\zeta$. Therefore, with approximation for $|\boldsymbol{A}|$, $\kappa$ is:
\begin{equation}
    \kappa = \left\lceil\frac{R}{N v_a \zeta} |\boldsymbol{A}_i|\right\rceil. \label{kappa2}
\end{equation}
For an active device $i$, $\kappa$ in \eqref{kappa2} approximates if $F_i$ is sufficiently higher than the future AoI of other active devices based on the percentile of $F_i$ on the known vector of future AoI $\boldsymbol{A}_i$. With $|\boldsymbol{A}_i| < |\boldsymbol{A}|$, the payoff $y_{\textrm{act}}$ when an active device $i$ does not transmit is:\vspace{-2mm}
\begin{multline}
    y_{\textrm{act}}(\boldsymbol{A}_i) = (\rho + \eta) \theta_{\mathbb{R}_{+}}(\boldsymbol{A}_i(\kappa) - F_i-\eta)\\
        - (\gamma + \eta) \theta_{\mathbb{R}_{+}}(F_i - \boldsymbol{A}_i(\kappa)). \label{payoff2}
\end{multline}
It is important to note that $y_{\textrm{act}}(\boldsymbol{A}_i)$ is equal to $y_{\textrm{full}}(\boldsymbol{A}_i)$, when the active devices have full information with $|\boldsymbol{A}_i| = |\boldsymbol{A}|$ and $\kappa = R$. Similar to $y_{\textrm{full}}(\boldsymbol{A}_i)$ \eqref{payoff1}, with the payoff $y_{\textrm{act}}(\boldsymbol{A}_i)$, the active devices with sufficiently high $F_i$ satisfying $F_i \geq \boldsymbol{A}_i(\kappa)$ transmit, while the active devices with $F_i$ such that $F_i < \boldsymbol{A}_i(\kappa)$ do not transmit. Moreover, $y_{\textrm{act}}(\boldsymbol{A}_i)$ also does not depend on actions of the players.\\
\indent In the IoT game, the payoff needs to consider the inactive devices. The inactive devices with $\alpha_i = 0$ do not transmit as they do not have messages to transmit. The payoff $y_{\textrm{nt}}$ when an device $i$ does not transmit is: \vspace{-1mm}
\begin{equation}
    y_{\textrm{nt}}(\boldsymbol{A}_i, \alpha_i) = (1 - \alpha_i)(\rho + \eta) + \alpha_i y_{\textrm{act}}(\boldsymbol{A}_i). \vspace{-1mm}
\end{equation} 
With payoff $y_{\textrm{nt}}(\boldsymbol{A}_i, \alpha_i)$ for not transmitting, the inactive devices with $\alpha_i = 0$ do not transmit as they get payoff of $\rho + \eta$. The active devices with $\alpha_i = 1$ decide to transmit or to not transmit based on $\kappa$ and future AoI. We let $\boldsymbol{x}(t) = [x_1(t), x_2(t), \cdots, x_N(t)]$ be a vector of all actions of $N$ devices at time slot $t$. For a given $\boldsymbol{x}(t)$, the payoff function $y_i(\boldsymbol{x}(t), \boldsymbol{A}_i, \alpha_i)$ for a device $i$ at time slot $t$ is: \vspace{-2mm}
\begin{multline}
    y_i(\boldsymbol{x}(t), \boldsymbol{A}_i, \alpha_i)\\ =\left\{
                \begin{array}{ll}
                  \rho \hfill &\text{if} \ x_i(t) \neq x_j(t) \ \forall \ j \neq i, x_i(t) \neq 0,\\
                  -\gamma \hfill &\text{if} \ \exists \ j \neq i \ \text{s.t.} \ x_j(t) = x_i(t) \neq 0,\\
                  y_{\textrm{nt}}(\boldsymbol{A}_i, \alpha_i) \hfill &\text{if} \ x_i(t) = 0.\\
                \end{array}
              \right. \label{payoff3}\vspace{-2mm}
\end{multline} 
In a simple game where $N = R = 2$ with $v_a = 1$, the payoffs of two devices at each time slot is summarized in Table \ref{game1}. NE in this IoT game is $\boldsymbol{x}(t) = [1, 2]$ or $\boldsymbol{x}(t) = [2, 1]$. For simple IoT game, NE is when two devices choose different RBs and get the payoff of $\rho$. If one device deviates from NE and the other device does not deviate from NE, the device deviating from NE gets the lower payoff of $-\gamma$ or $-(\gamma + \eta)$. Furthermore, in the IoT game, NE implies that a duplicate RB selection does not occur. An NE for a more general case of the IoT game can be found with certain conditions. 

\begin{table}[t]
\centering \vspace{0mm}
\caption{IoT game with $N = R = 2$.}
\label{game1}
\begin{tabular}{rccc}
                                  & $x_1(t) = 1$                                    & $x_1(t)=2$                                      & $x_1(t) = 0$                                              \\ \cline{2-4} 
\multicolumn{1}{r|}{$x_2(t) = 1$} & \multicolumn{1}{c|}{$(-\gamma, -\gamma)$}       & \multicolumn{1}{c|}{$(\rho, \rho)$}             & \multicolumn{1}{c|}{$(-(\gamma + \eta), \rho)$}           \\ \cline{2-4} 
\multicolumn{1}{r|}{$x_2(t) = 2$} & \multicolumn{1}{c|}{$(\rho, \rho)$}             & \multicolumn{1}{c|}{$(-\gamma, -\gamma)$}       & \multicolumn{1}{c|}{$(-(\gamma + \eta), \rho)$}           \\ \cline{2-4} 
\multicolumn{1}{r|}{$x_2(t) = 0$} & \multicolumn{1}{c|}{$(\rho, -(\gamma + \eta))$} & \multicolumn{1}{c|}{$(\rho, -(\gamma + \eta))$} & \multicolumn{1}{c|}{$(-(\gamma+\eta), -(\gamma + \eta))$} \\ \cline{2-4} 
\end{tabular}\vspace{-4mm}
\end{table}

\begin{theorem}\label{thm1}
For the IoT game with $N$ players with action $x_i(t) \in \{0, \mathcal{R}\}$, payoff function $y_i(\boldsymbol{x}(t))$, and $|\boldsymbol{A}_i| = |\boldsymbol{A}|$ for all $i$, any vector of actions $\boldsymbol{x}(t)$ such that at most $R$ active devices with $F_i \geq \boldsymbol{A}_i(R)$ transmit, the rest of the devices do not transmit, and each of the RBs is used by at most one device is an NE. 
\end{theorem}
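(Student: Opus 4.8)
The plan is to verify the Nash equilibrium condition directly: I would fix the proposed action profile $\boldsymbol{x}(t)$ and show that no single device can strictly increase its payoff \eqref{payoff3} by a unilateral change of action while the actions of all other devices are held fixed. The first step is to pin down the exact payoff each device receives under $\boldsymbol{x}(t)$. Since the AoI values are integers and $\eta \in (0,1)$, the indicator functions in the not-transmit payoff \eqref{payoff1} resolve cleanly: for an active device with $F_i \geq \boldsymbol{A}_i(R)$ the argument $\boldsymbol{A}_i(R) - F_i - \eta$ is strictly negative while $F_i - \boldsymbol{A}_i(R) \geq 0$, so $y_{\textrm{full}}(\boldsymbol{A}_i) = -(\gamma + \eta)$; conversely, for an active device with $F_i < \boldsymbol{A}_i(R)$ the gap $\boldsymbol{A}_i(R) - F_i$ is at least $1 > \eta$, so $y_{\textrm{full}}(\boldsymbol{A}_i) = \rho + \eta$. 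Using the definition of $y_{\textrm{nt}}$, an inactive device ($\alpha_i = 0$) likewise obtains $\rho + \eta$.

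Next I would partition the $N$ devices into three groups according to the structure of $\boldsymbol{x}(t)$: (i) the active devices with $F_i \geq \boldsymbol{A}_i(R)$, which transmit on distinct RBs and hence each receive $\rho$; (ii) the active devices with $F_i < \boldsymbol{A}_i(R)$, which do not transmit and receive $\rho + \eta$; and (iii) the inactive devices, which do not transmit and receive $\rho + \eta$. The key observation, guaranteed by the ``at most $R$'' hypothesis together with the one-device-per-RB requirement, is that every active device with $F_i \geq \boldsymbol{A}_i(R)$ is actually transmitting and is alone on its RB, so group (i) attains the maximal successful-transmission payoff $\rho$.

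With the payoffs established, the core of the argument is the deviation check for each group. For a device in group (i), switching to not transmitting yields $-(\gamma + \eta) < \rho$, and switching to any other RB yields either $\rho$ (if that RB is free) or $-\gamma$ (on collision), neither exceeding $\rho$; hence no deviation is profitable. For a device in group (ii) or (iii), the current payoff is $\rho + \eta$, whereas transmitting on any RB yields at most $\rho$ (alone) or $-\gamma$ (collision), both strictly below $\rho + \eta$ because $\eta > 0$; again no deviation helps. Since every device is already playing a best response, $\boldsymbol{x}(t)$ is an NE.

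The step I expect to require the most care is the treatment of ties at the threshold $\boldsymbol{A}_i(R)$ and the corresponding reading of the ``at most $R$'' hypothesis. The subtlety is that the not-transmit penalty $-(\gamma + \eta)$ is strictly worse than the collision penalty $-\gamma$, so a high-AoI active device that were left out of the transmitting set would strictly prefer to deviate and transmit, even into a collision. The proof therefore hinges on the hypothesis forcing \emph{all} active devices with $F_i \geq \boldsymbol{A}_i(R)$ to transmit, so that group (i) is exhausted by transmitters and no such left-out device exists; this is precisely why the $\eta$ offsets are built into \eqref{payoff1}. I would make this dependence explicit and, when ties would push the count of high-AoI devices beyond the number that can be seated on distinct RBs, note that the stated conditions (at most $R$ transmit, one device per RB) are exactly those under which the profile remains an equilibrium.
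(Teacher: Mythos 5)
Your proof is correct and takes essentially the same route as the paper's own proof: both partition the devices into successful high-AoI transmitters, low-AoI active non-transmitters, and inactive devices, compute each group's payoff under \eqref{payoff3}, and check that no unilateral deviation (switching RBs, stopping transmission, or starting to transmit) improves upon $\rho$ or $\rho + \eta$. Your explicit resolution of the indicator functions using $\eta \in (0,1)$ and integer AoI values, and your observation that a left-out high-AoI active device would prefer even a collision payoff $-\gamma$ to $-(\gamma+\eta)$ (so the hypothesis must be read as requiring all such devices to transmit), sharpen points the paper's proof leaves implicit, but the underlying argument is the same.
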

\begin{proof} See Appendix \ref{app2}. \end{proof}
\indent There are many sets of actions that satisfy the conditions described in Theorem \ref{thm1}, and, thus, NE in the IoT game is not unique. For instance, when the number of devices is equal to the number of RBs with $v_a = 1$ in an IoT, an NE is when each RB is used by one device, and, thus, the number of NEs in that particular IoT game is $N!$. There are many NEs, because the payoff of successful transmission does not depend on which RB is used. \emph{Although there are many NEs, the expected payoffs of devices at any given NE are the same, and, thus, one of those NEs is chosen with distributed RB allocation algorithm discussed in Section \ref{secsca}.} Similar to the simple IoT game in Table \ref{game1}, an NE for our IoT game implies that the number of transmitting devices $T_t$ is equal to the number of RBs $R$ and that a duplicate RB selection does not occur. Furthermore, at an NE, a transmitting device has one of the $R$ highest AoI, and a device that is not transmitting is inactive or has low AoI. This is because the payoff function with $|\boldsymbol{A}_i| = |\boldsymbol{A}|$ is designed to only allow the messages having the $R$ highest AoI to transmit. Therefore, the convergence of a distributed RB allocation algorithm to an NE reduces the average instantaneous AoI. However, when the devices only have partial information such that $|\boldsymbol{A}_i| < |\boldsymbol{A}|$, $\boldsymbol{x}(t)$ described in the Theorem \ref{thm1} is not necessarily an NE, because $\kappa$ is not necessarily equal to $R$.\\
\indent In addition to the NE, another solution concept is a socially optimal solution in which the overall payoff of an IoT game is maximized. In other words, a vector of actions is a socially optimal solution when the sum of payoffs of all devices is maximized. In an IoT game, a socially optimal solution is when the devices with highest future AoI $F_i$ fully utilize RBs without any duplicate RB selection. Therefore, similar to an NE in KPR game \cite{kolkata}, an NE in IoT game coincides with a socially optimal solution. A performance metric that can be used to describe an NE and a socially optimal solution is a service rate $s_r$, which is the percentage of RBs that are used by one device. NE in Theorem \ref{thm1} has a service rate of $1$, which implies that all RBs are used by one device, or the highest possible service rate of $\sfrac{T_t}{R}$.\\
\indent With the number of transmitting devices $T_t$ approximately equal to $R$ via IoT game design, a distributed RB allocation algorithm is necessary to enable the transmitting devices to share RBs autonomously. Moreover, with existence of a socially optimal NE in our IoT game, the convergence of a distributed RB allocation algorithm to an NE is crucial in minimizing the average instantaneous AoI. Therefore, to evaluate different RB allocation algorithms, the convergence to an NE and the service rates are analyzed. The service rate is an important metric to determine convergence to an NE, because a vector of actions must achieve service rate of $1$ or highest service rate to be an NE. Furthermore, the service rate is an important performance metric for the AoI, because the higher service rate implies that more devices are transmitting successfully at each time slot, reducing the average instantaneous AoI. Therefore, a distributed RB allocation algorithm must achieve a service rate of $1$ or increase the service rate as high as possible, because a high service rate is required to achieve a low average instantaneous AoI.

\subsubsection{Stochastic Crowd Avoidance} \label{secsca}
We propose a stochastic crowd avoidance (SCA) algorithm that enables the devices to avoid using the RBs that are used by many devices stochastically and to choose an RB for a successful transmission, as shown in Algorithm 2. For the SCA algorithm, the devices need to share more information in addition to their $F_i$ and can perform a channel sensing to determine the RBs that were not used at a previous time slot \cite{aoibackoff, aoicsma, aoisleep}, and \cite{aoiinst}. At time slot $t$, the devices that transmitted at the time slot $t-1$ share their previous actions $x_i(t-1) \in \mathcal{R}$ and the previous payoff $y_i(\boldsymbol{x}(t-1))$ of their transmission. We let $\boldsymbol{X}_i$ be the vector of actions of the transmitting devices at time slot $t-1$ that a device $i$ knows and $\boldsymbol{P}_i$ be the vector of payoffs of the transmitting devices at time slot $t-1$ that a device $i$ knows. Furthermore, we let $\boldsymbol{X}_i(x)$ with $x \in \mathcal{R}$ be the number of $x$ in $\boldsymbol{X}_i$. In other words, $\boldsymbol{X}_i(x)$ is the number of devices that chose the RB $x$ at time slot $t-1$ that a device $i$ knows. Learning from $\boldsymbol{X}_i$ and $\boldsymbol{P}_i$, the proposed SCA algorithm enables a transmitting device $i$ at time slot $t$ to not use the RBs that are being used successfully by other devices and to avoid using the contended RBs stochastically.\\
\begin{algorithm}[t]
\caption{SCA for device $i$ at time $t$.}
\begin{algorithmic}[1]\label{distalg}\vspace{1mm}
\item[1 :] \hspace{0.0cm}  Receive $\boldsymbol{X}_i$, $\boldsymbol{P}_i$, $\boldsymbol{A}_i$, and $\mathcal{L}$.
\item[2 :] \hspace{0.0cm}  {\bf if} $x_i(t-1) \in \mathcal{R}$, $y_i(\boldsymbol{x}(t-1)) = \rho$, and $\alpha_i = 1$,
\item[3 :] \hspace{0.3cm}       $x_i(t) \leftarrow x_i(t-1)$.
\item[4 :] \hspace{0.0cm}  {\bf else if} $x_i(t-1) \in \mathcal{R}$, $y_i(\boldsymbol{x}(t-1)) = \rho$, and $\alpha_i = 0$,
\item[5 :] \hspace{0.3cm}		$j \leftarrow$ one neighboring device chosen with $\frac{F_j}{\sum_{F_h \in \boldsymbol{A}_i} F_h}$.
\item[6 :] \hspace{0.3cm}		$x_j(t) \leftarrow x_i(t-1)$.
\item[7 :] \hspace{0.0cm}  {\bf else if} $x_i(t-1) \in \mathcal{R}$ and $y_i(\boldsymbol{x}(t-1)) = -\gamma$,
\item[8 :] \hspace{0.3cm}		$z \leftarrow 1$ with probability $\boldsymbol{X}_i(x_i(t-1))^{-1}$.
\item[9 :] \hspace{0.3cm}       {\bf if} $z = 1$, $x_i(t) \leftarrow x_i(t-1)$.
\item[10 :] \hspace{0.3cm}		{\bf else} $x_i(t) \leftarrow$ randomly chosen from $\mathcal{L}$. {\bf end if}.  
\item[11 :] \hspace{0.0cm} {\bf else if} $x_i(t-1) = 0$,
\item[12 :] \hspace{0.3cm}		$x_i(t) \leftarrow$ randomly chosen from $\mathcal{L}$.
\item[13 :] \hspace{0.0cm} {\bf end if.}
\end{algorithmic}
\end{algorithm}
\indent Using SCA algorithm, at time slot $t$, a transmitting device $i$ determines its RB usage based on $x_i(t-1)$ and $y_i(\boldsymbol{x}(t-1))$. If the transmission at time slot $t-1$ is successful such that $x_i(t-1) \in \mathcal{R}$ and $y_i(\boldsymbol{x}(t-1)) = \rho$, then transmitting device $i$ uses the same RB $x_i(t) = x_i(t-1)$. If the transmission at time slot $t-1$ is unsuccessful such that $x_i(t-1) \in \mathcal{R}$ and $y_i(\boldsymbol{x}(t-1)) = -\gamma$, then transmitting device $i$ uses the same RB $x_i(t) = x_i(t-1)$ with probability $\boldsymbol{X}_i(x_i(t-1))^{-1}$ or chooses an RB from a set $\mathcal{L}$ uniformly randomly. $\mathcal{L}$ is a set of the RBs that were not used by any of the device at time slot $t-1$ determined with channel sensing. If there is no transmission at time slot $t-1$ such that $x_i(t-1) = 0$, then the transmitting device $i$ chooses an RB from $\mathcal{L}$ uniformly randomly. In the case where $v_a < 1$, a device that transmitted successfully at time slot $t-1$ may no longer be active at time slot $t$. In this case, a device $i$ with $\alpha_i = 0$ and $y_i(\boldsymbol{x}(t-1)) = \rho$ chooses a neighboring device $j$ with $\alpha_j = 1$ with a probability proportional to $F_j$, such that the probability is $\sfrac{F_j}{\sum_{F_h \in \boldsymbol{A}_i} F_h}$.\\
\indent In our SCA algorithm, the RB that is used successfully at time slot $t-1$ is also used successfully at time slot $t$ if $v_a = 1$ or if there is an active neighboring device. Moreover, with $r_c$ sufficiently large, the expected number of transmitting devices using an RB that is used by more than one device at the previous time slot $t-1$ is $1$ at the current time slot $t$. This is because the probability of choosing the same RB after a transmission failure is $\sfrac{1}{\boldsymbol{X}_i(x_i(t-1))}$. The devices avoid to use the same RB stochastically even after a transmission failure, because the strict crowd avoidance causes the crowding in other RBs resulting in more transmission failures. Furthermore, when a device $i$ chooses some other RB such that $x_i(t) \neq x_i(t-1)$, device $i$ chooses from a set of RBs $\mathcal{L}$ that were not used at time slot $t-1$. This is to avoid using the RBs that are either used successfully or crowded, both of which cause the transmission failures. With SCA algorithm design to avoid duplicate RB selection, next we prove that the proposed SCA algorithm converges to an NE under certain IoT system parameters.
\begin{theorem}\label{thm2}
When $N$ devices are always active with full information and use $1$ out of $N$ RBs to transmit with negligible outage probability $p_{i,t}$ at each time slot, the vector of actions $\boldsymbol{x}(t)$ converges to an NE using SCA.
\end{theorem}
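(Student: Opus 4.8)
The plan is to model the SCA dynamics as a Markov chain on the space of action profiles $\boldsymbol{x}(t)$ and to argue almost-sure absorption at an NE. First I would specialize Theorem~\ref{thm1} to the present setting: with $N=R$, all devices always active, and full information, the threshold is $\kappa=R=N$, so $\boldsymbol{A}_i(N)$ is the smallest future AoI and every device satisfies $F_i\ge\boldsymbol{A}_i(N)$. Hence all $N$ devices transmit in every slot, and an NE is precisely a profile in which each device occupies a distinct RB, i.e.\ a one-to-one matching of the $N$ devices onto the $N$ RBs. The goal therefore reduces to showing that the chain reaches such a collision-free matching with probability one.

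Next I would introduce a progress potential $m_t$, the number of RBs used by exactly one device (equivalently, the number of devices transmitting successfully) at slot $t$. Two structural facts drive the argument. (i) \emph{Monotonicity}: since $\alpha_i=1$ for all $i$, the SCA algorithm keeps every successful device on its current RB, while every relocating device draws its new RB from $\mathcal{L}$, the set of RBs \emph{unused} at the previous slot; because a successful RB is in use, no relocating device can ever land on it. Thus a singleton RB stays a singleton, so $m_{t+1}\ge m_t$, and $m_t$ is a non-decreasing integer sequence bounded by $N$. (ii) \emph{Absorption}: $m_t=N$ is equivalent to a collision-free matching, which by the NE characterization above is absorbing under the dynamics.

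The crux is a uniform lower bound on the probability of strict progress whenever $m_t<N$. Because all $N$ devices transmit onto $N$ RBs, any collision (an RB with $k\ge2$ devices) forces, by pigeonhole, at least one RB to be free, and $m_t<N$ is exactly the existence of such a collision. On a collided RB each device independently keeps its RB with probability $\boldsymbol{X}_i(x_i(t-1))^{-1}=1/k$ and otherwise relocates into $\mathcal{L}$, so the probability that exactly one device remains is $(1-1/k)^{k-1}$, which is at least $1/e$ for every $k\ge2$. On that event the surviving device becomes a new singleton, and by the monotonicity argument no relocating device can join it; hence $m_{t+1}\ge m_t+1$. Consequently $\Pr(m_{t+1}\ge m_t+1 \mid m_t<N)\ge 1/e$, uniformly over all non-NE states.

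Finally I would assemble these pieces: $m_t$ is non-decreasing, bounded by $N$, and strictly increases at each step with probability at least $1/e$ until it reaches $N$, so the waiting time between successive increments is stochastically dominated by a geometric random variable. Summing at most $N$ such waiting times shows $m_t=N$ is attained in finite time almost surely, at which point $\boldsymbol{x}(t)$ is a collision-free matching and stays there; thus the SCA dynamics converge to an NE. I expect the main obstacle to be the monotonicity bookkeeping, namely rigorously ruling out that a relocating device ever re-enters a just-resolved or already-successful RB (which rests on relocations being confined to $\mathcal{L}$), together with confirming that $\mathcal{L}\neq\varnothing$ whenever $m_t<N$ so that the relocation steps are well defined.
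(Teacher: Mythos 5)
Your proposal is correct, and it reaches the result by a genuinely different route than the paper. The paper's proof works in expectation: it uses the same two structural facts you identify (a successful active device keeps its RB, and relocating devices choose only from the unused set $\mathcal{L}$), but then computes the \emph{expected} number of RBs retained by exactly one device (each of $k$ colliding devices stays with probability $1/k$, so on average one remains), treats the relocating devices as facing a fresh uniform random selection on $\mathcal{L}$, and iterates to obtain an expected number of unused/contested RBs of $R\left(\sfrac{(R-1)}{R}\right)^{N(t-1)}$, concluding convergence because this quantity decays geometrically to zero. Your proof instead sets up an absorbing-chain argument: the singleton count $m_t$ is monotone non-decreasing (resting on the same structural facts), every non-absorbing state yields a strict increase with probability at least $(1-1/k)^{k-1}\ge 1/e$ because some collided RB resolves to exactly one occupant, and geometric stochastic domination then gives absorption at the collision-free matching in almost-surely finite time, with $F_i\ge \boldsymbol{A}_i(R)$ holding trivially when $N=R$ so that the matching is an NE by Theorem \ref{thm1}. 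The trade-off: the paper's computation yields an explicit transient decay rate, but as written it is a mean-field style expectation argument---vanishing expected counts along a heuristic recursion do not by themselves constitute an almost-sure convergence statement---whereas your potential-plus-uniform-progress argument is a complete probability-one proof (with an $O(N)$ expected hitting time), at the cost of a cruder worst-case $1/e$ rate. The two obstacles you flag are both resolved exactly as you suggest: a singleton RB can never be re-entered since it is absent from $\mathcal{L}$, and pigeonhole with $N=R$ transmitting devices guarantees $\mathcal{L}\neq\varnothing$ whenever a collision exists.
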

\begin{proof} See Appendix \ref{app3}. \end{proof}
Under the conditions in Theorem \ref{thm2}, $\boldsymbol{x}(t)$ converges to an NE, and this implies that the service rate increases to $1$. In general, SCA algorithm increases the service rate, because an RB, which is used by $1$ device at previous time slot $t-1$, is still used by $1$ device at current time slot $t$. Therefore, SCA algorithm is effective in reducing the average instantaneous AoI. However, SCA is susceptible to the high outage probability $p_{i,t}$ based on the SNR. This is because SCA cannot distinguish the transmission failure due to the duplicate RB selection and the transmission failure due to the outage based on the SNR. Furthermore, with partial information $|\boldsymbol{A}_i| < |\boldsymbol{A}|$, the devices also have partial information of $\boldsymbol{X}_i$ and $\boldsymbol{P}_i$, and, thus, the devices cannot choose $x_i(t)$ accurately in the case of transmission failure.\\
\indent In a massive IoT with $N > R$ and partial information, the vector of actions $\boldsymbol{x}(t)$ does not converge to an NE using SCA, because it is not possible to achieve service rate of $1$. The service rate cannot be $1$, because the transmitting devices are changing every time slot and a duplicate RB selection is inevitable. Since the service rate cannot be $1$, the average instantaneous AoI in a massive IoT is higher than the average instantaneous AoI in an ideal IoT described in Theorem \ref{thm2}. However, in a massive IoT, the proposed SCA algorithm still enables the transmitting devices to stochastically avoid duplicate RB selection with available information. Therefore, the proposed SCA algorithm still increases the service rate and reduces the average instantaneous AoI. However, in that case, it does not reach an NE but rather a sub-optimal, heuristic solution. To evaluate performance of the proposed SCA algorithm, next we study a random RB selection for distributed RB allocation scheme.
\subsubsection{Random RB Selection} 
One way for the transmitting devices to determine their RB usage is via random selection. In other words, the actions $x_i(t)$ of the transmitting devices are chosen uniformly random in $\mathcal{R}$. This random RB selection is used as a baseline. Even with $|\boldsymbol{A}_i| = |\boldsymbol{A}|$ for all $i$, the random RB selection is highly unlikely to achieve $\boldsymbol{x}(t)$ such that each of the RBs is used by at most one device, which is the requirement of $\boldsymbol{x}(t)$ to be an NE. Furthermore, the service rate $s_r$ using random RB selection for the transmitting devices is low. 
\begin{proposition}
\normalfont At a time slot $t$, the service rate $s_r$ with $T_t$ transmitting devices using random RB selection is:
\begin{equation}
    s_r = \frac{T_t}{R} \left(\frac{R-1}{R}\right)^{T_t - 1},\label{srr}
\end{equation}
and, for a massive IoT with $N$ increasing to infinity, the service rate $s_r$ is:
\begin{equation}
    \lim_{N \rightarrow \infty} s_r = \frac{T_t}{R-1} \exp\left(\frac{-T_t}{R}\right) \label{srriot}
\end{equation}
\end{proposition}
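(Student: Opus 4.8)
The plan is to model each of the $T_t$ transmitting devices as selecting one of the $R$ RBs independently and uniformly at random, and then to compute the expected fraction of RBs that end up used by exactly one device, which is by definition the service rate $s_r$ in \eqref{srr}. The whole argument rests on a single-RB tagging computation combined with linearity of expectation.

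First I would fix a tagged RB $x \in \mathcal{R}$. Under uniform random selection each transmitting device chooses $x$ independently with probability $1/R$, so the number of devices landing on $x$ is a binomial random variable with parameters $T_t$ and $1/R$. The RB $x$ is singly used precisely when exactly one device chooses it, an event with probability
\[
\binom{T_t}{1}\frac{1}{R}\left(1-\frac{1}{R}\right)^{T_t-1} = \frac{T_t}{R}\left(\frac{R-1}{R}\right)^{T_t-1}.
\]
Next I would invoke symmetry together with linearity of expectation: since the $R$ RBs are statistically interchangeable, each is singly used with the probability just computed. Writing the service rate as the average over the $R$ RBs of the indicator that an RB is singly used, its expectation equals that single-RB probability, which establishes \eqref{srr}.

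For the asymptotic form \eqref{srriot}, I would first rearrange the exact expression by pulling one factor of $(R-1)/R$ out of the exponent, namely
\[
s_r = \frac{T_t}{R}\left(\frac{R-1}{R}\right)^{T_t-1} = \frac{T_t}{R-1}\left(1-\frac{1}{R}\right)^{T_t},
\]
and then apply the standard exponential limit $\left(1-1/R\right)^{T_t} \to \exp(-T_t/R)$ valid as the network, and hence $R$, grows large in the massive-IoT regime, which yields \eqref{srriot} directly.

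I expect the computation itself to be routine; the only point needing genuine care is pinning down the regime in which the final limit holds. One must first rearrange the exact formula into the $\tfrac{T_t}{R-1}\left(1-1/R\right)^{T_t}$ form before the exponential approximation is clean, and one should note that the convergence $\left(1-1/R\right)^{T_t}\to\exp(-T_t/R)$ is a Poisson-type approximation of the binomial survival probability whose error is controlled only when $R$ is large relative to $T_t$. Making this scaling explicit, together with the clarification that $s_r$ denotes the \emph{expected} service rate (so that linearity of expectation applies exactly even though the realized fraction of singly-used RBs is random), is what renders both \eqref{srr} and its massive-IoT limit \eqref{srriot} precise.
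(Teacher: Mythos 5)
Your derivation of \eqref{srr} is exactly the paper's: the paper likewise identifies the service rate with the probability that a tagged RB is chosen by exactly one of the $T_t$ devices, $\binom{T_t}{1}\frac{1}{R}\left(1-\frac{1}{R}\right)^{T_t-1}$, and your extra sentence justifying this identification via symmetry and linearity of expectation (so that $s_r$ is the expected fraction of singly-used RBs) is a clarification the paper leaves implicit.

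The limit \eqref{srriot} is where you part ways with the paper, and the difference matters. The paper's regime is $N\to\infty$ with $R$ \emph{fixed}: since $T_t$ grows proportionally to $N$ (fixed $v_a$), the paper performs the same rearrangement to $\frac{T_t}{R-1}\left(1-\frac{1}{R}\right)^{T_t}$ that you do, then writes $1-\frac{1}{R} = 1 + \frac{-T_t/R}{T_t}$ and formally invokes $\left(1+\frac{x}{n}\right)^{n} \to e^{x}$ with $n = T_t \to \infty$. You instead justify $\left(1-\frac{1}{R}\right)^{T_t}\to \exp\left(-T_t/R\right)$ by letting $R$ grow large. That is the opposite regime from the one in the statement: the massive-IoT hypothesis here is that the device population outruns a fixed, small pool of RBs ($R=50$ in the simulations), so $R$ cannot be sent to infinity; indeed, with $T_t \approx N v_a \gg R$, your own stated error condition ($R$ large relative to $T_t$) fails. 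As written, your proof establishes a large-$R$ Poisson approximation rather than the claimed $N\to\infty$ limit. To be fair, your remark about error control diagnoses a genuine weakness of the result itself: the paper's manipulation applies the exponential limit with $x = -T_t/R$ varying with $n=T_t$, which is not a legitimate limit operation, and the right-hand side of \eqref{srriot} still depends on $T_t$, so it is an asymptotic rewriting rather than a true limit. But to prove the statement as the paper intends it, the limit must be taken in $T_t$ (equivalently $N$) with $R$ held fixed, as the paper does, rather than in $R$.
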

\begin{proof} See Appendix \ref{app4}. \end{proof}
Under the conditions in Theorem \ref{thm2}, when $N$ devices are always active and use $1$ out of $N$ RBs, the number of transmitting devices $T_t$ is equal to $N$. In this case, the service rate using random RB selection is always less than $1$ even with $N = R \geq 2$. On the other hand, for a massive IoT with $N > R$, the service rate using random RB selection exponentially decreases to $0$ as $N$ increases. Therefore, the random RB selection is not suitable for a massive IoT in which the number of devices $N$ outnumbers the number of RBs $R$, Furthermore, with low service rate, the probability of a successful transmission is low for a device, and, thus, the average instantaneous AoI is high. 
\section{Simulation Results and Analysis} 
For our simulations, we consider a rectangular area with width $w$ and length $l$ within which the $N$ devices are deployed following a Poisson point process. We let $w = l = 10$ m and $R = 50$ with a $10$ MHz frequency band \cite{50rblte}, while the number of devices $N$ will be varied for analysis. We choose a time slot duration of $1$ ms \cite{1msts} and expected value of SNR of $20$ dB with $\lambda = .1$ and $\sigma^2 = 0.001$. To vary outage probability $p_{i,t}$, different values of $\epsilon$ are used. Moreover, a device is assumed to be of type $1$ with probability $0.6$ with $m_1 = 0.75$, while a device is assumed to be of type $2$ with probability $0.4$ with $m_2 = 0.75$. The average of the current AoI $C_i$ at the time slot of successful transmission is the performance metric for different RB allocation schemes, and their performances are analyzed with varying $v_a$ and $p_{i,t}$.\\
\indent For the centralized RB allocation scheme, the number of RACH preambles $P$ for the uplink transmission request is $64$ \cite{rach}. Three different kinds of priority scheduling are analyzed. Priority scheduling without learning \cite{aoischedule1, aoischedule3, multihop, adhoc} does not learn the device types, and, hence, this scheme is used for baseline comparison. The proposed priority scheduling with learning learns the device types using maximum likelihood and information on $F_i$. Priority scheduling with full information assumes that the BS always knows the types of all devices, and, hence, this scheme is used for optimal performance comparison. All three priority scheduling algorithms are analyzed with $N = 500$, while varying $v_a$ and $p_{i,t}$.

\begin{figure}[t]       
\centering
\includegraphics[width = 9cm]{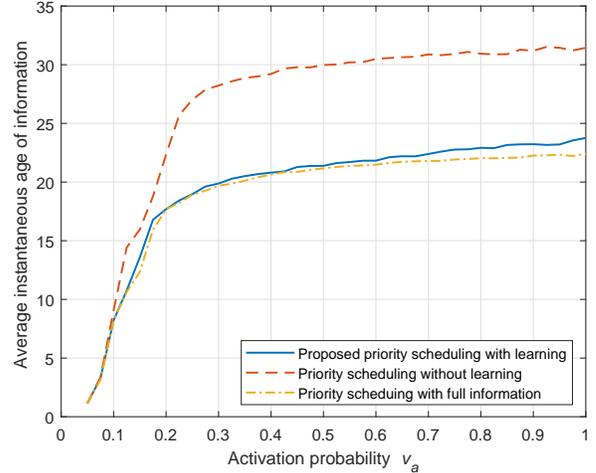}\vspace{-.2cm}
\caption{Average instantaneous AoI using centralized RB allocation schemes while varying $v_a$.}\vspace{-.5cm}
\label{cpa}
\end{figure}

Fig. \ref{cpa} shows the average instantaneous AoI of the devices using centralized RB allocation schemes for different values of the activation probability $v_a$ with $p_{i,t} = 0.01$ and $\epsilon = 1$. The average instantaneous AoI for the no learning case quickly increases to $27.88$ and then increases slowly above $30$, while the average instantaneous AoI for both learning and full information quickly increases to $19.62$ and then increases slowly above $20$. As $v_a$ increases, the average instantaneous AoI increases for all priority scheduling algorithms, because more devices are transmitting and RACH preamble collisions are more likely to occur. However, for $v_a > 0.2$, the average instantaneous AoI flattens and increases at a much slower rate with increasing $v_a$ for all priority scheduling algorithms, because all RBs are fully saturated. Moreover, there is a significant difference between priority scheduling with learning and without learning. After the average instantaneous AoI flattens, the difference of the average instantaneous AoI between priority scheduling with learning and without learning is constantly about $8$. This implies that the proposed priority scheduling scheme with learning achieves about $26.7\%$ lower average instantaneous AoI when compared to simple priority scheduling scheme. Hence, learning the device types is important to decrease the average instantaneous AoI for priority scheduling. However, there is an insignificant difference between priority scheduling with learning and with full information, which implies that the learning is effective in learning the device types. 

\begin{figure}[t]       
\centering
\includegraphics[width = 9cm]{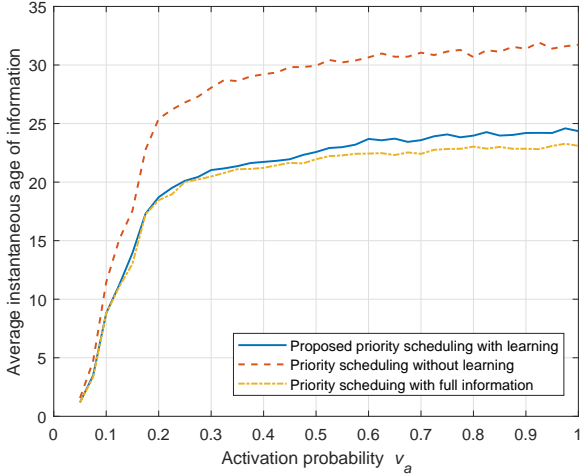}\vspace{-.2cm}
\caption{Average instantaneous AoI using centralized RB allocation schemes with different transmit powers while varying $v_a$.}\vspace{-.5cm}
\label{cpaxp}
\end{figure}

Fig. \ref{cpaxp} shows the average instantaneous AoI of the devices with different transmit powers using centralized RB allocation schemes for different values of the activation probability $v_a$ with $p_{i,t} = 0.01$ and $\epsilon = 1$. The devices have different transmit powers such that their SNR values are uniformly distributed random variables from $17.0$ dB to $21.8$ dB, and the only difference between Fig. \ref{cpa} and Fig. \ref{cpaxp} is assumption on the transmit powers of devices. The overall trends of the average instantaneous AoI for all three centralized RB allocation schemes are similar to the trends shown in Fig. \ref{cpa}. However, a notable difference is the average instantaneous AoI after the curves flatten. After the average instantaneous AoI flattens, the average instantaneous AoI increases slowly above $30$ for priority scheduling without learning, increases slowly to $25$ for priority scheduling with learning, and increases slowly to $23.5$ for priority scheduling with full information. With different SNR values randomly assigned to the devices, some devices have higher outage probability and other devices have lower outage probability compared to the devices in Fig. \ref{cpa}. With exponentially aging messages, an increase in the average instantaneous AoI from devices with higher outage probability outweighs a decrease in the average instantaneous AoI from devices with lower outage probability. Therefore, there is a slight increase in average instantaneous AoI when the devices have different transmit powers. However, similar to Fig. \ref{cpa}, the learning can still effectively decrease the average instantaneous AoI, and priority scheduling with learning performs similar to priority scheduling with full information even when the devices have different transmit powers.

\begin{figure}[t]       
\centering
\includegraphics[width = 9cm]{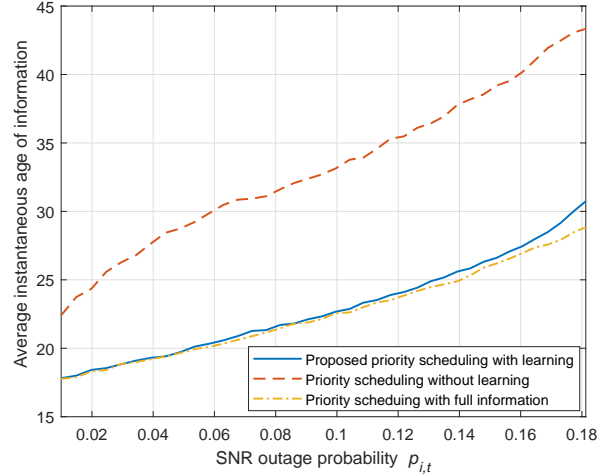}\vspace{-.2cm}
\caption{Average instantaneous AoI using centralized RB allocation schemes while varying $p_{i,t}$.}\vspace{-.5cm}
\label{ceps}
\end{figure}

Fig. \ref{ceps} shows the average instantaneous AoI of the devices using centralized RB allocation schemes for different values of the SNR outage probability $p_{i,t}$ with $v_a = 0.2$ and varying $\epsilon$ from $1$ to $20$. Unlike in Fig. \ref{cpa}, the average instantaneous AoI does not flatten and increases at about the same rate. With $p_{i,t} = 0.02$, the average instantaneous AoI without learning is $24.34$, and the average instantaneous AoI with learning is $18.41$. With $p_{i,t} = 0.16$, the average instantaneous AoI without learning is $40.17$, and the average instantaneous AoI with learning is $27.45$. Therefore, the difference between the average instantaneous AoI with learning and without learning increases as $p_{i,t}$ increases. Furthermore, with high $p_{i,t}$, the proposed priority scheduling scheme with learning achieves about $31.7\%$ lower average instantaneous AoI when compared to simple priority scheduling scheme. For a higher $p_{i,t}$ and more frequent transmission failures, the AoI of the exponentially aging messages becomes much higher than the AoI of the linearly aging messages. In this case, the learning scheme, which enables the BS to accurately identify the messages aging faster, becomes more crucial in reducing the average instantaneous AoI. Furthermore, the difference between the average instantaneous AoI with learning and with full information also increases as $p_{i,t}$ increases. This is because it becomes increasingly difficult to learn the device types as $p_{i,t}$ increases.\\
\indent For the distributed RB allocation scheme, the communication range $r_c$ determines the information $\boldsymbol{A}_i$ that the devices have. For the given dimensions of the deployment area, $r_c \geq 15$ m is sufficiently large such that $|\boldsymbol{A}_i| = |\boldsymbol{A}|$ for any device $i$. SCA algorithm is compared against two algorithms, which are the random RB selection and the pre-determined RB selection. The pre-determined RB selection scheme \cite{kolkata} is known as the dictator's solution in the KPR game, and the RB usage for a device $i$ is pre-determined based on the rank of $F_i$. For instance, if $F_i$ is $\kappa$-th highest in $\boldsymbol{A}_i$ for $\kappa \leq R$, device $i$ uses a specific RB as previously agreed among IoT devices. The pre-determined RB allocation scheme requires full information $|\boldsymbol{A}_i| = |\boldsymbol{A}|$ for all $i$ and always achieves a service rate $s_r$ of $1$. While the pre-determined RB selection is used for optimal performance comparison, the random RB selection is used for baseline comparison. To analyze different distributed RB allocation schemes, the activation probability $v_a$, the SNR outage probability $p_{i,t}$, and the communication range $r_c$ are varied. In addition to the average instantaneous AoI, the service rate $s_r$ is evaluated for different distributed RB allocation schemes.

\begin{figure}[t]       
\centering
\includegraphics[width = 9cm]{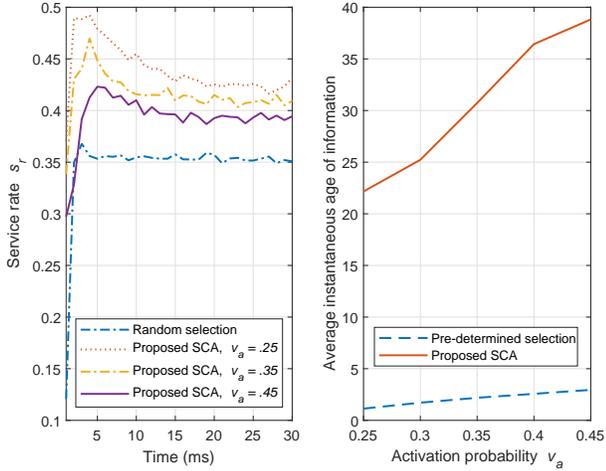}\vspace{-.2cm}
\caption{Average instantaneous AoI and service rate using distributed RB allocation schemes while varying $v_a$.}\vspace{-.5cm}
\label{dpa}
\end{figure}

Fig. \ref{dpa} shows the average instantaneous AoI and the service rate of the devices using distributed RB allocation schemes for different values of the activation probability $v_a$ with $p_{i,t} = 0.01$, $\epsilon  = 1$, $r_c = 10$ m, and $N = 200$. It is important to note that the expected number of newly active devices at a given time slot is $Nv_a$, and, thus, the number of active devices $N_t$ outnumbers the number of RBs $R$ for high values of $v_a$, simulating a massive IoT. Moreover, with $r_c = 10$ m, SCA only has partial information such that $|\boldsymbol{A}_i| < |\boldsymbol{A}|$ for all $i$. The service rate converges to $0.42$ for SCA with $v_a = 0.25$, $0.41$ for SCA with $v_a = 0.35$, $0.39$ for SCA with $v_a = 0.45$, and $0.35$ for random RB allocation with $v_a = 0.35$. As $v_a$ increases from $0.25$ to $0.45$, the average instantaneous AoI increases from $22.16$ to $38.82$ using SCA. As $N_t$ increases with increasing $v_a$, the transmission failure is more likely to occur due to the duplicate RB selection, because $R$ is fixed. Therefore, as $v_a$ increases, $s_r$ decreases, and the average instantaneous AoI increases. The pre-determined RB allocation scheme achieves a much lower average instantaneous AoI compared to SCA, because the pre-determined RB allocation scheme requires and uses the full information. Furthermore, the average instantaneous AoI with random RB allocation is multiple orders of magnitude higher than the average instantaneous AoI with SCA. Therefore, in a massive IoT with partial information, the proposed SCA algorithm is the most suitable algorithm to achieve low average instantaneous AoI as it balances between having full information and performing arbitrary allocations.

\begin{figure}[t]       
\centering
\includegraphics[width = 9cm]{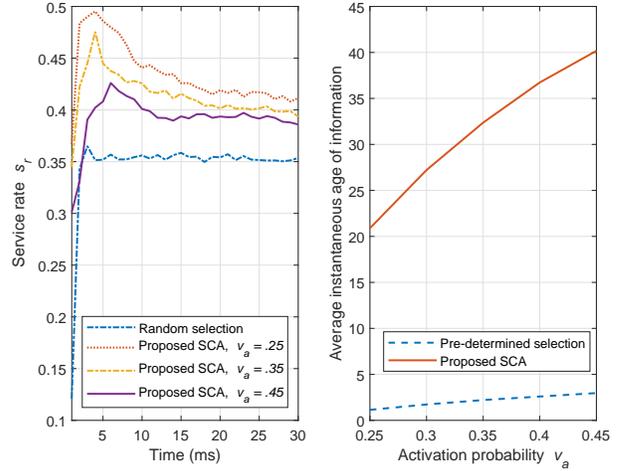}\vspace{-.2cm}
\caption{Average instantaneous AoI and service rate using distributed RB allocation schemes with different transmit powers while varying $v_a$.}\vspace{-.5cm}
\label{dpaxp}
\end{figure}

Fig. \ref{dpaxp} shows the average instantaneous AoI and the service rate of the devices with different transmit powers using distributed RB allocation schemes for different values of the activation probability $v_a$ with $p_{i,t} = 0.01$, $\epsilon = 1$, $r_c = 10$ m, and $N = 200$. The devices have different transmit powers such that their SNR values are uniformly distributed random variables from $17.0$ dB to $21.8$ dB, and the only difference between Fig. \ref{dpa} and Fig. \ref{dpaxp} is assumption on the transmit powers of devices. The difference in the service rates between Fig. \ref{dpa} and Fig. \ref{dpaxp} is insignificant. However, there is a notable increase in the average instantaneous AoI in Fig. \ref{dpaxp} compared to Fig. \ref{dpa}. Some devices have higher outage probability and other devices have lower outage probability, because the devices have different transmit powers. With an exponential aging function, an increase in the average instantaneous AoI with higher outage probability is more significant than a decrease in the average instantaneous AoI with lower outage probability. Therefore, similar to Fig. \ref{cpaxp}, there is a slight increase in the average instantaneous AoI when the devices have different transmit powers. Even when the devices have different transmit powers, the proposed SCA algorithm is still the most suitable algorithm to achieve low average instantaneous AoI in a massive IoT with partial information.

\begin{figure}[t]       
\centering
\includegraphics[width = 9cm]{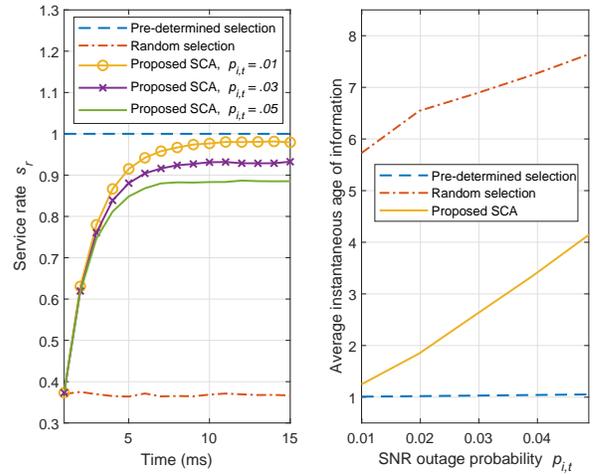}\vspace{-.2cm}
\caption{Average instantaneous AoI and service rate using distributed RB allocation schemes while varying $p_{i,t}$.}\vspace{-.5cm}
\label{deps}
\end{figure}

Fig. \ref{deps} shows the average instantaneous AoI and the service rate of the devices using distributed RB allocation schemes for different values of the SNR outage probability $p_{i,t}$ with $r_c = 10$ m, $v_a = 1$, $N = R = 50$, and varying $\epsilon$ from $1$ to $5$. It is important to note that the number of active devices $N_t$ is equal to $R$ with $N = R$ and $v_a = 1$, and this is the condition considered in Theorem \ref{thm2}. The service rate converges to $0.98$ for SCA with $p_{i,t} = 0.01$, $0.92$ for SCA with $p_{i,t} = 0.03$, $0.88$ for SCA with $p_{i,t} = 0.05$, and $0.37$ for random RB allocation. As $p_{i,t}$ increases from $0.01$ to $0.05$, the average instantaneous AoI increases from $1.25$ to $4.14$ using SCA, while the average instantaneous AoI increases from $5.73$ to $7.64$ using random RB allocation. With high $p_{i,t}$, the proposed SCA algorithm achieves about $45.8\%$ lower average instantaneous AoI when compared to random RB allocation. Since $T_t = N$ with $v_a = 1$ and $N = R$, the theoretical value of $s_r$ \eqref{srr} for random RB allocation case matches the simulated value of $s_r$ in Fig. \ref{deps}. As $p_{i,t}$ increases, the converged value of $s_r$ for SCA decreases, because the proposed SCA assumes that the transmission failures are caused by duplicate RB selection. Therefore, using SCA, a device $i$ stochastically avoids to use an RB even when there was no duplicate RB selection and the transmission failure is caused by SNR outage. Furthermore, the difference between the average instantaneous AoI using SCA and random RB allocation decreases as $p_{i,t}$ increases, because increasing $p_{i,t}$ has a more negative impact on SCA than on random RB allocation. However, it is important to note that SCA with low $p_{i,t}$ converges quickly to the service rate of $1$ with low $p_{i,t}$ as discussed in Theorem \ref{thm2}.

\begin{figure}[t]       
\centering
\includegraphics[width = 9cm]{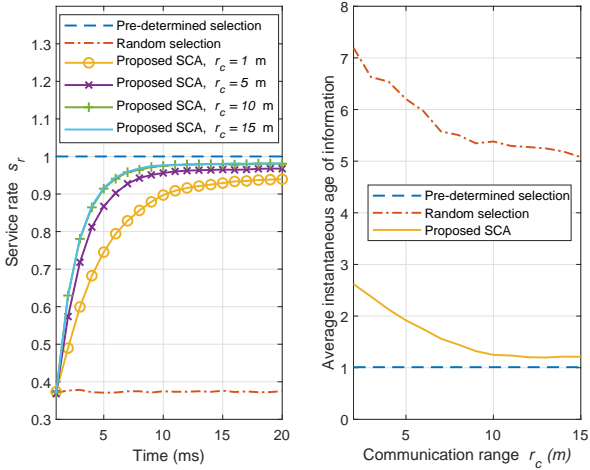}\vspace{-.2cm}
\caption{Average instantaneous AoI and service rate using distributed Rb allocation schemes while varying $r_c$.}\vspace{-.5cm}
\label{drc}
\end{figure}

Fig. \ref{drc} shows the average instantaneous AoI and the service rate of the devices using distributed RB allocation schemes for different values of the communication range $r_c$ with $p_{i,t} = 0.01$, $\epsilon  = 1$, $v_a = 1$, and $N = R = 50$. The communication range $r_c$ determines the amount of information $\boldsymbol{A}_i$ that the devices have, and $r_c = 15$ m implies that the devices have full information $|\boldsymbol{A}_i| = |\boldsymbol{A}|$ for all $i$. The service rate converges to $0.981$ for SCA with $r_c = 15$ m, $0.978$ for SCA with $r_c = 10$ m, $0.967$ for SCA with $r_c = 5$ m, $0.939$ for SCA with $r_c = 1$ m, and $0.374$ for random RB allocation. As $r_c$ increases from $2$ m to $15$ m, the average instantaneous AoI decreases from $2.62$ to $1.21$ using SCA, while the average instantaneous AoI decreases from $7.19$ to $5.07$ using random RB allocation. With low $r_c$, the proposed SCA algorithm achieves about $63.6\%$ lower average instantaneous AoI when compared to random RB allocation. Similar to the Fig. \ref{deps}. the theoretical and simulated values of $s_r$ for random RB allocation are matched. As $r_c$ increases, the convergence value of $s_r$ for SCA increases, because the devices have more information $\boldsymbol{A}_i$ with higher $r_c$. As $r_c$ increases sufficiently such that $|\boldsymbol{A}_i| = |\boldsymbol{A}|$ for all $i$, the service rate converges to $1$ as discussed in Theorem \ref{thm2}. However, SCA with only partial information can still achieve $s_r$ close to $1$. Moreover, as $r_c$ increases, the average instantaneous AoI using SCA converges to the average instantaneous AoI using pre-determined RB allocation scheme.\\
\indent Next, the proposed centralized and distributed RB allocation schemes are compared in a massive IoT with $N > R$ and in an ideal IoT described in Theorem \ref{thm2}. To analyze different RB allocation schemes, the activation probability $v_a$ and the SNR outage probability $p_{i,t}$ are varied,

\begin{figure}[t]       
\centering
\includegraphics[width = 9cm]{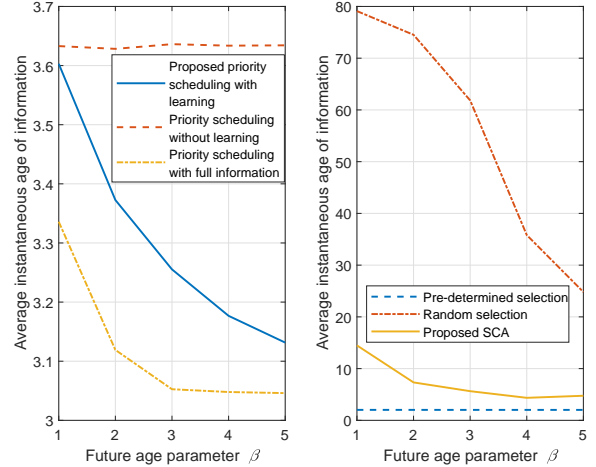}\vspace{-.2cm}
\caption{Average instantaneous AoI using centralized and distributed RB allocation schemes while varying $\beta$ in a massive IoT.}\vspace{-.5cm}
\label{beta}
\end{figure}

Fig. \ref{beta} shows the average instantaneous AoI of the devices using centralized and distributed RB allocation schemes for different values of $\beta$ with $p_{i,t} = 0.01$, $\epsilon = 1$, $r_c = 10$ m, $N = 100$, and $v_a = 1$. This simulates a massive IoT as the number of devices $N$ greatly outnumbers the number of RBs $R$. For the centralized RB allocation scheme, as $\beta$ increases, the average instantaneous AoI with learning decreases from $3.6$ to $3.13$, and the average instantaneous AoI with full information decreases from $3.34$ to $3.05$. However, the average instantaneous AoI without learning does not change significantly. With many type $2$ devices frequently transmitting exponentially aging messages, high values of $\beta$ can effectively reduce the average instantaneous AoI of type $2$ devices as BS learns the device types. However, without learning the device types, high values of $\beta$ are ineffective in reducing the average instantaneous AoI. For the distributed RB allocation scheme, as $\beta$ increases, the average instantaneous AoI with random selection decreases from $79.5$ to $25.1$, and the average instantaneous AoI with proposed SCA decreases from $13.4$ to $5.1$. This is because high values of $\beta$ enable the exponentially aging messages to be transmitted before their AoI increases greatly due to duplicate RB selection. However, the average instantaneous AoI with pre-determined selection does not change significantly, because most duplicate RB selection can be avoided with given $r_c$ and pre-determined selection. $\beta$ affects centralized and distributed RB selection schemes differently depending on other parameters of the IoT.

\begin{figure}[t]       
\centering
\includegraphics[width = 9cm]{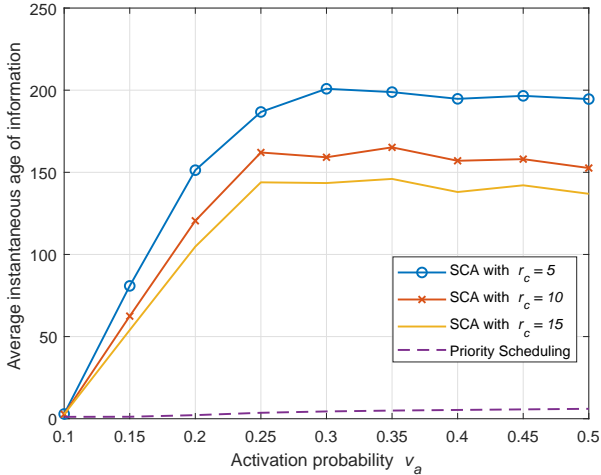}\vspace{-.2cm}
\caption{Average instantaneous AoI using centralized and distributed RB allocation schemes while varying $v_a$ in a massive IoT.}\vspace{-.5cm}
\label{cdpa}
\end{figure}

Fig. \ref{cdpa} shows the average instantaneous AoI of the devices using centralized and distributed RB allocation schemes for different values of the activation probability $v_a$ with $p_{i,t} = 0.01$, $\epsilon = 1$, and $N = 200$. This simulates a massive IoT as the number of devices $N$ greatly outnumbers the number of RBs $R$. As $v_a$ increases, the average instantaneous AoI converges to $200$ for SCA with $r_c = 5$ m, $157$ for SCA with $r_c = 10$ m, and $145$ for SCA with $r_c = 15$ m. On the other hand, for priority scheduling, the average instantaneous AoI increases from $1.14$ to $6.03$ as $v_a$ increases from $0.1$ to $0.5$. With high $v_a$, the proposed SCA algorithm with $r_c = 15$ m achieves about $24$-fold higher average instantaneous AoI when compared to the proposed priority scheduling with learning. For almost any values of $v_a$, centralized RB allocation with priority scheduling performs much better than distributed RB allocation with SCA in terms of the average instantaneous AoI. However, centralized RB allocation scheme requires the BS to dictate the RB allocation for all devices, and, thus, centralized RB allocation scheme may not be viable for some of the IoT. Moreover, similar to Fig. \ref{cpa}, the average instantaneous AoI flattens after a certain value of $v_a$, because all RBs are fully saturated. There is a performance gap between SCA with different values of $r_c$, because $r_c$ is directly related to the amount of information that the devices have. With higher $r_c$ and more information for the devices, SCA is more effective in reducing the average instantaneous AoI. 

\begin{figure}[t]       
\centering
\includegraphics[width = 9cm]{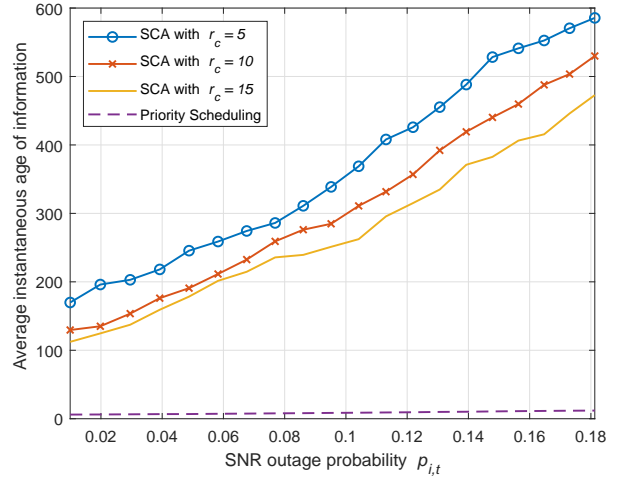}\vspace{-.2cm}
\caption{Average instantaneous AoI using centralized and distributed RB allocation schemes while varying $p_{i,t}$ in a massive IoT.}\vspace{-.5cm}
\label{cdeps}
\end{figure}

Fig. \ref{cdeps} shows the average instantaneous AoI of the devices using centralized and distributed RB allocation schemes for different values of the SNR outage probability $p_{i,t}$ with $v_a = 0.5$, $N = 200$, and varying $\epsilon$ from $1$ to $20$. Similar to Fig. \ref{cdpa}, this simulates a massive IoT. When $p_{i,t} = 0.18$. the average instantaneous AoI is $585.59$ using SCA with $r_c = 5$ m, $529.94$ using SCA with $r_c = 10$ m, $472.81$ using SCA with $r_c = 15$ m, and $11.90$ using priority scheduling. With high $p_{i,t}$, the proposed SCA algorithm with $r_c = 15$ m achieves about $40$-fold higher average instantaneous AoI when compared to the proposed priority scheduling with learning. Similar to Fig. \ref{cdpa}, centralized RB allocation with priority scheduling performs much better than distributed RB allocation with SCA in terms of the average instantaneous AoI for all values of $p_{i,t}$. It is interesting to note that the difference in the average instantaneous AoI between SCA algorithms increases as $p_{i,t}$ increases. This implies that SCA with less information is more severely affected by increasing $p_{i,t}$ than SCA with more information. 

\begin{figure}[t]       
\centering
\includegraphics[width = 9cm]{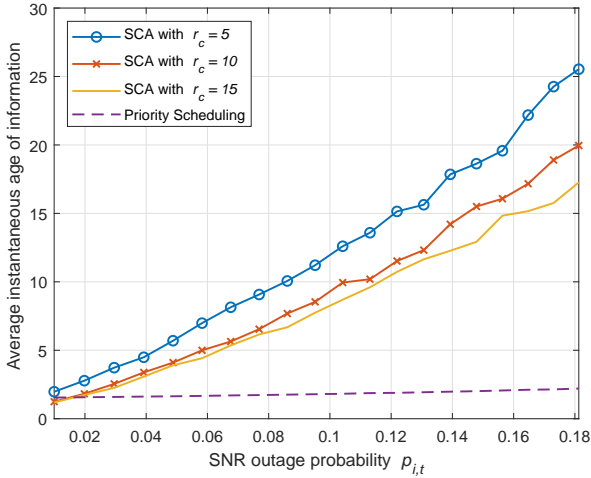}\vspace{-.2cm}
\caption{Average instantaneous AoI using centralized and distributed RB allocation schemes while varying $p_{i,t}$ in an ideal IoT.}\vspace{-.5cm}
\label{cdid}
\end{figure}

Fig. \ref{cdid} shows the average instantaneous AoI of the devices using centralized and distributed RB allocation schemes for different values of the SNR outage probability $p_{i,t}$ with $v_a = 1$, $N = R = 50$, and varying $\epsilon$ from $1$ to $20$. This simulates an ideal IoT for SCA as some of the conditions for NE convergence in Theorem \ref{thm2} are satisfied. When $p_{i,t} = 0.18$. the average instantaneous AoI is $25.53$ using SCA with $r_c = 5$ m, $19.96$ using SCA with $r_c = 10$ m, $17.27$ using SCA with $r_c = 15$ m, and $2.21$ using priority scheduling. With high $p_{i,t}$, the proposed SCA algorithm with $r_c = 15$ m achieves about $8$-fold higher average instantaneous AoI when compared to the proposed priority scheduling with learning. It is interesting to note that even in an ideal IoT for SCA, priority scheduling with learning performs better than SCA in terms of the average instantaneous AoI for most values of $p_{i,t}$. Moreover, similar to Fig. \ref{cdeps}, the difference in the average instantaneous AoI between SCA algorithms increases as $p_{i,t}$ increases.\\
\indent From our simulations, we observe that both priority scheduling and SCA are susceptible to high SNR outage probability $p_{i,t}$ as the average instantaneous AoI increases without flattening as $p_{i,t}$ increases. This is because the SNR outage probability is directly related to the transmission failures. However, the average instantaneous AoI increases slowly after a certain value of the activation probability $v_a$, because the RBs are fully saturated. Since increasing $v_a$ with fixed $N$ is equivalent to increasing $N$ with fixed $v_a$, the average instantaneous AoI also flattens for the case in which only the number of devices $N$ increases. Although centralized RB allocation scheme outperforms distributed RB allocation scheme in most cases, SCA can still achieve a high service rate $s_r$ and low average instantaneous AoI only with partial information. Furthermore, communication range $r_c$ and information availability are critical to the performance of SCA. 
\section{Conclusion}
In this paper, we have proposed centralized and distributed approaches for allocating the limited communication resources based on the aging function and the current AoI of IoT devices. In the presence of both linear and exponential aging functions, we have shown that comparing the future AoI achieves a lower average instantaneous AoI at the BS than comparing the current AoI. For the centralized approach, we have introduced a priority scheduling scheme with learning, which enables the BS to allocate the limited RBs to the heterogeneous devices based on their future AoI. For the distributed approach, we have formulated the problem of autonomously allocating the limited RBs to the devices using game theory, and we have designed payoff functions to encourage the devices with high AoI to transmit, while discouraging the devices with low AoI to not transmit. Furthermore, we have proposed a novel SCA algorithm such that the heterogeneous devices can allocate the RBs in a self-organizing manner to avoid the duplicate RB selection and to minimize the AoI. We have proved the conditions that a vector of actions in the IoT game must satisfy to achieve an NE. Furthermore, we have proved that the actions of devices using our proposed SCA algorithm converge to an NE, if the devices have sufficient information under certain network parameters. Simulation results have shown that the average instantaneous AoI is an increasing function of the activation probability and the SNR outage probability. Moreover, the simulation results have shown that the service rate is an increasing function of the communication range and a decreasing function of the activation probability and the SNR outage probability. We have compared our centralized and distributed RB allocation schemes, and we have shown that our centralized RB allocation scheme outperforms our distributed RB allocation scheme in most cases. However, our proposed SCA algorithm has shown to be effective in reducing the AoI and increasing the service rate only with partial information. With high SNR outage probability, the proposed priority scheduling scheme with learning has shown to achieve about $31.7\%$ lower average instantaneous AoI when compared to simple priority scheduling scheme. Furthermore, with high SNR outage probability, the proposed SCA algorithm has shown to achieve about $45.8\%$ lower average instantaneous AoI when compared to random RB allocation.

\section*{Acknowledgment}
This research was supported by the U.S. Office of Naval Research (ONR) under Grant N00014-19-1-2621.

\appendix \vspace{-0.0cm}
\subsection{Proof of Proposition 1} \label{app1}
Without a loss of generality, let there be an IoT device $i$ with aging function $a_i(t)$ and an IoT device $h$ with aging function $b_h(t)$. Moreover, the proof only considers the case with $N = 2$ and $R = 1$. It is sufficient to only consider $N = 2$, because comparing the AoI of $N > 2$ devices is equivalent to doing pairwise AoI comparison $\sfrac{N(N-1)}{2}$ times. It is unnecessary to consider the cases of $R = 0$ and $R \geq 2$. If $N = 2$ and $R = 0$, none of the devices can be allocated an RB. If $N = 2$ and $R \geq 2$, all devices can be allocated the RBs. Therefore, the AoI comparison to determine RB allocation is unnecessary.\\
\indent At time slot $\tau$, the current AoI of devices $i$ and $h$ must be one of the following cases: $a_i(\tau) > b_h(\tau)$, $a_i(\tau) = b_h(\tau)$, or $a_i(\tau) < b_h(\tau)$. If $a_i(\tau) \leq b_h(\tau)$, then $a_i(\tau+\beta) \leq b_h(\tau+\beta)$ for any positive integer $\beta$. At time slot $\tau$, the current AoI comparison with $t = \tau$ and the future AoI comparison with $t = \tau + \beta$ are equivalent, because device $h$ is allocated with an RB for both. However, if $a_i(\tau) > b_h(\tau)$, the future AoI of devices $i$ and $h$ can be any of the following cases: $a_i(\tau+\beta) > b_h(\tau+\beta)$, $a_i(\tau+\beta) = b_h(\tau+\beta)$, or $a_i(\tau+\beta) < b_h(\tau+\beta)$. For the case of $a_i(\tau) > b_h(\tau)$ and $a_i(\tau+\beta) \geq b_h(\tau+\beta)$, the current AoI comparison with $t = \tau$ and the future AoI comparison with $t = \tau + \beta$ are equivalent, because device $i$ is allocated with an RB for both.\\
\indent Comparing the current AoI and the future AoI are different if $a_i(\tau) > b_h(\tau)$ and $a_i(\tau+\beta) < b_h(\tau+\beta)$, because comparing the current AoI allocates the RB to device $i$, while comparing the future AoI allocates the RB to device $h$. When one device is allocated an RB at time slot $\tau$ and the other device is allocated an RB at time slot $\tau + \beta$, the RB allocation based on current AoI achieves the average instantaneous AoI of $0.5(a_i(\tau) + 2^{\beta}b_h(\tau))$, and the RB allocation based on future AoI achieves the average instantaneous AoI of $0.5(a_i(\tau)+\beta+b_h(\tau))$. For any $\beta \in \mathbb{Z}_+$, comparing the average instantaneous AoI of two cases is:\vspace{-1mm}
\begin{align}
    \frac{a_i(\tau) + 2^{\beta}b_h(\tau)}{2} &> \frac{a_i(\tau)+\beta+b_h(\tau)}{2},\\
    (2^{\beta} - 1) b_h(\tau) &> \beta,\\
    b_h(\tau) &> \frac{\beta}{2^{\beta} - 1}.
\end{align}
Even when one device is allocated with an RB at time slot $\tau$ and the other device is allocated with an RB one time slot later at time slot $\tau + 1$, the current AoI comparison yields higher average instantaneous AoI than the future AoI comparison. Since $\frac{\beta}{2^{\beta} - 1} = 1$ with $\beta = 1$, $b_h(\tau)$ cannot be less than or equal to 1, because the condition of $a_i(\tau) > b_h(\tau)$ and $a_i(\tau+1) < b_h(\tau+1)$ cannot be satisfied. Therefore, at time slot $\tau$, comparing the future AoI with $t = \tau + \beta$ to determine the RB allocation achieves lower average instantaneous AoI than comparing the current AoI with $t = \tau$. \vspace{-1mm}
\subsection{Proof of Theorem 1} \label{app2}
\indent With $\boldsymbol{x}(t)$ satisfying the given conditions, then at most $R$ active devices with sufficiently high values of $F_i$ are transmitting successfully, while rest of the devices are not transmitting. Assuming that all other devices do not change their action, an active device $i$ that is transmitting successfully with sufficiently high $F_i$ cannot change its action $x_i(t)$ to get higher payoff than its current payoff of $\rho$. If device $i$ uses some other RB, then its transmission may fail due to the duplicate RB usages, getting the payoff of $-\gamma$, or its transmission may succeed, getting the same payoff of $\rho$. If device $i$ does not transmit, then the payoff is $-(\gamma + \eta)$ as $F_i$ is greater than $\boldsymbol{A}_i(R)$. Therefore, the active devices that are transmitting successfully with sufficiently high $F_i$ do not change their action.\\
\indent The devices that are not transmitting may be active or inactive. An inactive device that is not transmitting has a payoff of $(\rho + \eta)$, which is higher than the payoff of transmitting successfully. Therefore, the inactive devices do not transmit. With $\boldsymbol{x}(t)$ satisfying the given conditions, then active devices with $F_i < \boldsymbol{A}_i(R)$ are not transmitting. With $y_i(\boldsymbol{x}(t))$, the active devices with $F_i < \boldsymbol{A}_i(R)$ have the payoff of $(\rho + \eta)$, which is higher than the payoff of transmitting successfully. Therefore, the active devices with $F_i < \boldsymbol{A}_i(R)$ do not change their action from not transmitting.\\
\indent With the design of payoff function $y_i(\boldsymbol{x}(t))$ \eqref{payoff3}, inactive devices and active devices with $F_i < \boldsymbol{A}_i(R)$ have the highest payoff of $(\rho +\eta)$ by not transmitting. Moreover, with $|\boldsymbol{A}_i| = |\boldsymbol{A}|$, at most $R$ active devices with $F_i \geq \boldsymbol{A}_i(R)$ have higher payoff from transmitting successfully than from not transmitting. With $\boldsymbol{x}(t)$ such that each of the RBs is used by at most one device, the active devices with $F_i \geq \boldsymbol{A}_i(R)$ have the highest payoff of $\rho$ by transmitting successfully. Therefore, with $y_i(\boldsymbol{x}(t))$, any vector of actions $\boldsymbol(x)(t)$ such that at most $R$ active devices with $F_i \geq \boldsymbol{A}_i(R)$ transmit, rest of the devices do not transmit, and each of the RBs is used by at most one device is an NE.\vspace{-1mm}
\subsection{Proof of Theorem 2} \label{app3}
\indent At time slot $t = 1$, $\boldsymbol{x}(t)$ is initialized as a random RB selection. For $N = R$ and $v_a = 1$, $T_t = N$ and the service rate $s_r$ is $\left(\sfrac{(R - 1)}{R}\right)^{N-1}$. Therefore, from the initial RB allocation, the expected number of RBs that are used by one device is $Rs_r$. With SCA and $v_a = 1$, an RB that is used by one device at time slot $t = 1$ is used by the same device at time slot $t = 2$. Furthermore, with SCA, the RBs that are used by more than one device at the time slot $t = 1$ are expected to be used by one device at the time slot $t = 2$. Therefore, at time slot $t = 2$, $\mathbb{E}[|\mathcal{L}|] = R\left(\sfrac{(R-1)}{R}\right)^{N}$, which is the expected number of RBs that are used by none of the devices at time slot $t = 1$. $\mathbb{E}[|\mathcal{L}|]$ also is the expected number of devices that are competing to use the RBs in $\mathcal{L}$, because the devices use at most $1$ RB at each time slot. Since the devices choose the RBs in $\mathcal{L}$ randomly, the RB selection at time slot $t = 2$ is equivalent to the random RB selection with the number of devices and RBs equal to $R\left(\sfrac{(R-1)}{R}\right)^{N}$. Moreover, the same analysis done for $t = 1$ can be done with $t = 2$.\\
\indent Expanding to the general case, at time slot $t$, the expected number of RBs that are used by none of the devices and the expected number of devices competing to use the RBs in $\mathcal{L}$ is $R\left(\sfrac{(R-1)}{R}\right)^{N(t-1)}$. As $t$ increases to infinity, the expected number of RBs that are used by none of the devices and the expected number of devices competing to use the RBs in $\mathcal{L}$ decrease to $0$. With SCA, this implies that the number of RBs each used by one device increases to $R$ as $t$ increases to infinity. Furthermore, with $N = R$ and $|\boldsymbol{A}_i| = |\boldsymbol{A}|$ for all $i$, any active device $i$ satisfies $F_i \geq \boldsymbol{A}_i(R)$. The action space $\boldsymbol{x}(t)$ converged using SCA is such that all $R$ active devices transmit and each of the RBs is used by one device. Therefore, $\boldsymbol{x}(t)$ converged using SCA is an NE.\vspace{-1mm}
\subsection{Proof of Proposition 2} \label{app4}
When the $T_t$ transmitting devices use random RB selection, the service rate is equivalent to the probability of an RB being used by only one transmitting device. Therefore, the service rate $s_r$ is:
\begin{equation}
    s_r = \binom{T_t}{1} \frac{1}{R} \left(1 - \frac{1}{R}\right)^{T_t - 1} = \frac{T_t}{R}\left(\frac{R - 1}{R}\right)^{T_t - 1}.
\end{equation}
When the number of IoT devices $N$ increases to infinity in a massive IoT, the number of transmitting devices $T_t$ also increases to infinity with fixed $v_a$, and the service rate $s_r$ is:
\begin{align}
    &\lim_{T_t \rightarrow \infty} s_r =  \lim_{T_t \rightarrow \infty} \frac{T_t}{R} \left(1 + \frac{\sfrac{-T_t}{R}}{T_t}\right)^{T_t - 1},\\
    & \ \ \ = \lim_{T_t \rightarrow \infty} \frac{T_t}{R-1} \left(1 + \frac{\sfrac{-T_t}{R}}{T_t}\right)^{T_t} = \frac{T_t}{R-1}\exp\left(\frac{-T_t}{R}\right). 
\end{align}

\bibliographystyle{IEEEtran}
\bibliography{references}

\begin{thebibliography}{10}
\providecommand{\url}[1]{#1}
\csname url@samestyle\endcsname
\providecommand{\newblock}{\relax}
\providecommand{\bibinfo}[2]{#2}
\providecommand{\BIBentrySTDinterwordspacing}{\spaceskip=0pt\relax}
\providecommand{\BIBentryALTinterwordstretchfactor}{4}
\providecommand{\BIBentryALTinterwordspacing}{\spaceskip=\fontdimen2\font plus
\BIBentryALTinterwordstretchfactor\fontdimen3\font minus
  \fontdimen4\font\relax}
\providecommand{\BIBforeignlanguage}[2]{{%
\expandafter\ifx\csname l@#1\endcsname\relax
\typeout{** WARNING: IEEEtran.bst: No hyphenation pattern has been}%
\typeout{** loaded for the language `#1'. Using the pattern for}%
\typeout{** the default language instead.}%
\else
\language=\csname l@#1\endcsname
\fi
#2}}
\providecommand{\BIBdecl}{\relax}
\BIBdecl

\bibitem{saad2019vision}
W.~{Saad}, M.~{Bennis}, and M.~{Chen}, ``A vision of {6G} wireless systems:
  Applications, trends, technologies, and open research problems,'' \emph{IEEE
  Network}, pp. 1--9, Oct. 2019.

\bibitem{indus}
L.~D. {Xu}, W.~{He}, and S.~{Li}, ``Internet of things in industries: A
  survey,'' \emph{IEEE Transactions on Industrial Informatics}, vol.~10, no.~4,
  pp. 2233--2243, Nov. 2014.

\bibitem{health}
M.~{Hassanalieragh}, A.~{Page}, T.~{Soyata}, G.~{Sharma}, M.~{Aktas},
  G.~{Mateos}, B.~{Kantarci}, and S.~{Andreescu}, ``Health monitoring and
  management using internet of things ({IoT}) sensing with cloud-based
  processing: Opportunities and challenges,'' in \emph{Proc of IEEE
  International Conference on Services Computing}, New York, USA, Jun. 2015,
  pp. 285--292.

\bibitem{mohadrone}
M.~{Mozaffari}, A.~T.~Z. {Kasgari}, W.~{Saad}, M.~{Bennis}, and M.~{Debbah},
  ``Beyond {5G} with {UAVs}: Foundations of a {3D} wireless cellular network,''
  \emph{IEEE Transactions on Wireless Communications}, vol.~18, no.~1, pp.
  357--372, Jan. 2019.

\bibitem{chenvr}
M.~{Chen}, W.~{Saad}, and C.~{Yin}, ``Echo-liquid state deep learning for
  $360^\circ$ content transmission and caching in wireless {VR} networks with
  cellular-connected {UAVs},'' \emph{IEEE Transactions on Communications},
  vol.~67, no.~9, pp. 6386--6400, Sep. 2019.

\bibitem{vehicular}
Y.~{Zhang}, M.~{Chen}, N.~{Guizani}, D.~{Wu}, and V.~C.~M. {Leung}, ``{SOVCAN}:
  Safety-oriented vehicular controller area network,'' \emph{IEEE
  Communications Magazine}, vol.~55, no.~8, pp. 94--99, Aug. 2017.

\bibitem{saad}
Z.~Dawy, W.~Saad, A.~Ghosh, J.~G. Andrews, and E.~Yaacoub, ``Toward massive
  machine type cellular communications,'' \emph{IEEE Wireless Communications},
  vol.~24, no.~1, pp. 120--128, Feb. 2017.

\bibitem{iotalloc}
S.~F. {Abedin}, M.~G.~R. {Alam}, S.~M.~A. {Kazmi}, N.~H. {Tran}, D.~{Niyato},
  and C.~S. {Hong}, ``Resource allocation for ultra-reliable and enhanced
  mobile broadband {IoT} applications in fog network,'' \emph{IEEE Transactions
  on Communications}, vol.~67, no.~1, pp. 489--502, Jan. 2019.

\bibitem{minehetero}
T.~{Park} and W.~{Saad}, ``Distributed learning for low latency machine type
  communication in a massive internet of things,'' \emph{IEEE Internet of
  Things Journal}, vol.~6, no.~3, pp. 5562--5576, Jun. 2019.

\bibitem{aoi1}
S.~{Kaul}, R.~{Yates}, and M.~{Gruteser}, ``Real-time status: How often should
  one update?'' in \emph{Proc. of IEEE International Conference on Computer
  Communications}, Florida, USA, Mar. 2012, pp. 2731--2735.

\bibitem{bo2}
B.~{Zhou} and W.~{Saad}, ``Joint status sampling and updating for minimizing
  age of information in the internet of things,'' \emph{IEEE Transactions on
  Communications}, vol.~67, no.~11, pp. 7468--7482, Nov. 2019.

\bibitem{aoischedule1}
I.~{Kadota}, A.~{Sinha}, E.~{Uysal-Biyikoglu}, R.~{Singh}, and E.~{Modiano},
  ``Scheduling policies for minimizing age of information in broadcast wireless
  networks,'' \emph{IEEE/ACM Transactions on Networking}, vol.~26, no.~6, pp.
  2637--2650, Dec. 2018.

\bibitem{aoischedule3}
I.~{Kadota}, A.~{Sinha}, and E.~{Modiano}, ``Scheduling algorithms for
  optimizing age of information in wireless networks with throughput
  constraints,'' \emph{IEEE/ACM Transactions on Networking}, vol.~27, no.~4,
  pp. 1359--1372, Aug. 2019.

\bibitem{multihop}
R.~{Talak}, S.~{Karaman}, and E.~{Modiano}, ``Minimizing age-of-information in
  multi-hop wireless networks,'' in \emph{Proc. of 55th Annual Allerton
  Conference on Communication, Control, and Computing}, Illinois, USA, Oct.
  2017, pp. 486--493.

\bibitem{adhoc}
S.~{Leng} and A.~{Yener}, ``Age of information minimization for wireless ad hoc
  networks: A deep reinforcement learning approach,'' in \emph{Proc. of IEEE
  Global Communications Conference}, Hawaii, USA, Dec. 2019, pp. 1--6.

\bibitem{aoimultiaccess}
R.~D. {Yates} and S.~K. {Kaul}, ``Status updates over unreliable multiaccess
  channels,'' in \emph{Proc. of International Symposium on Information Theory},
  Aachen, Germany, Jun. 2017, pp. 331--335\color{black}.

\bibitem{aoischedule2}
R.~{Talak}, I.~{Kadota}, S.~{Karaman}, and E.~{Modiano}, ``Scheduling policies
  for age minimization in wireless networks with unknown channel state,'' in
  \emph{Proc. of IEEE International Symposium on Information Theory}, Colorado,
  USA, Jun. 2018, pp. 2564--2568.

\bibitem{aoicsi}
R.~{Talak}, S.~{Karaman}, and E.~{Modiano}, ``Optimizing age of information in
  wireless networks with perfect channel state information,'' in \emph{Proc. of
  16th International Symposium on Modeling and Optimization in Mobile, Ad Hoc,
  and Wireless Networks}, Shanghai, China, May 2018, pp. 1--8.

\bibitem{bo3}
B.~{Zhou} and W.~{Saad}, ``Minimum age of information in the internet of things
  with non-uniform status packet sizes,'' \emph{IEEE Transactions on Wireless
  Communications}, vol.~19, no.~3, pp. 1933--1947, Mar. 2020.

\bibitem{aoischedule4}
R.~{Talak}, S.~{Karaman}, and E.~{Modiano}, ``Distributed scheduling algorithms
  for optimizing information freshness in wireless networks,'' in \emph{Proc.
  of IEEE 19th International Workshop on Signal Processing Advances in Wireless
  Communications}, Kalamata, Greece, Jun. 2018, pp. 1--5.

\bibitem{multiinfo}
Y.~{Sun}, E.~{Uysal-Biyikoglu}, and S.~{Kompella}, ``Age-optimal updates of
  multiple information flows,'' in \emph{Proc. of IEEE Conference on Computer
  Communications Workshops}, Hawaii, USA, Apr. 2018, pp. 136--141.

\bibitem{multisource}
R.~D. {Yates} and S.~K. {Kaul}, ``The age of information: Real-time status
  updating by multiple sources,'' \emph{IEEE Transactions on Information
  Theory}, vol.~65, no.~3, pp. 1807--1827, Mar. 2019.

\bibitem{aoibackoff}
S.~{Gopal} and S.~K. {Kaul}, ``A game theoretic approach to {DSRC} and {WiFi}
  coexistence,'' in \emph{Proc. of International Conference on Computer
  Communications Workshops}, Hawaii, USA, Apr. 2018, pp. 565--570.

\bibitem{aoicsma}
A.~{Maatouk}, M.~{Assaad}, and A.~{Ephremides}, ``On the age of information in
  a {CSMA} environment,'' \emph{IEEE/ACM Transactions on Networking}, vol.~28,
  no.~2, pp. 818--831, Apr. 2020.

\bibitem{aoisleep}
A.~M. {Bedewy}, Y.~{Sun}, R.~{Singh}, and N.~B. {Shroff}, ``Optimizing
  information freshness using low-power status updates via sleep-wake
  scheduling,'' \emph{arXiv preprint arXiv:1910.00205}, 2019\color{black}.

\bibitem{qm3}
L.~{Huang} and E.~{Modiano}, ``Optimizing age-of-information in a multi-class
  queueing system,'' in \emph{Proc. of IEEE International Symposium on
  Information Theory}, Hong Kong, China, Jun. 2015, pp. 1681--1685.

\bibitem{mm1}
M.~{Costa}, M.~{Codreanu}, and A.~{Ephremides}, ``On the age of information in
  status update systems with packet management,'' \emph{IEEE Transactions on
  Information Theory}, vol.~62, no.~4, pp. 1897--1910, Apr. 2016.

\bibitem{mg1}
Y.~{Inoue}, H.~{Masuyama}, T.~{Takine}, and T.~{Tanaka}, ``The stationary
  distribution of the age of information in {FCFS} single-server queues,'' in
  \emph{Proc. of IEEE International Symposium on Information Theory}, Aachen,
  Germany, Jun. 2017, pp. 571--575.

\bibitem{mg11}
E.~{Najm}, R.~{Yates}, and E.~{Soljanin}, ``Status updates through m/g/1/1
  queues with harq,'' in \emph{Proc. of IEEE International Symposium on
  Information Theory}, Aachen, Germany, Jun. 2017, pp. 131--135.

\bibitem{aoiurllc}
M.~K. {Abdel-Aziz}, C.~{Liu}, S.~{Samarakoon}, M.~{Bennis}, and W.~{Saad},
  ``Ultra-reliable low-latency vehicular networks: Taming the age of
  information tail,'' in \emph{Proc. of IEEE Global Communications Conference},
  Abu Dhabi, UAE, Dec. 2018, pp. 1--7.

\bibitem{qm2}
K.~{Chen} and L.~{Huang}, ``Age-of-information in the presence of error,'' in
  \emph{Proc. of IEEE International Symposium on Information Theory},
  Barcelona, Spain, Jul. 2016, pp. 2579--2583.

\bibitem{aoinoma}
A.~{Maatouk}, M.~{Assaad}, and A.~{Ephremides}, ``Minimizing the age of
  information: {NOMA} or {OMA}?'' in \emph{Proc. of IEEE Conference on Computer
  Communications Workshops}, Paris, France, Apr. 2019, pp. 102--108.

\bibitem{nonlin1}
Y.~{Sun} and B.~{Cyr}, ``Sampling for data freshness optimization: Non-linear
  age functions,'' \emph{Journal of Communications and Networks}, vol.~21,
  no.~3, pp. 204--219, Jun. 2019.

\bibitem{gc2020}
T.~{Park}, W.~{Saad}, and B.~{Zhou}, ``On the minimization of non-linear age of
  information in the internet of things,'' \emph{submitted to Proc. of IEEE
  International Conference on Communications}, pp. 1--6, Jun. 2021.

\bibitem{nonlin2}
A.~{Even} and G.~{Shankaranarayanan}, ``Utility-driven assessment of data
  quality,'' \emph{SIGMIS Database}, vol.~38, no.~2, p. 75–93, May 2007.

\bibitem{nonlin3}
S.~{Ioannidis}, A.~{Chaintreau}, and L.~{Massoulie}, ``Optimal and scalable
  distribution of content updates over a mobile social network,'' in
  \emph{Proc. of IEEE International Conference on Computer Communications}, Rio
  de Janeiro, Brazil, Apr. 2009, pp. 1422--1430.

\bibitem{kolkata}
A.~S. Chakrabarti, C.~B.~K, A.~Chatterjee, and M.~Mitra, ``{The Kolkata Paise
  Restaurant Problem and Resource Utilization},'' \emph{Physica A: Statistical
  Mechanics and its Applications}, vol. 388, no.~12, pp. 2420--2426, Jun. 2009.

\bibitem{aoi3gpp}
{3GPP}, ``Evolved universal terrestrial radio access ({E-UTRA}) and evolved
  universal terrestrial radio access network ({E-UTRAN}) (release 15),'' TS
  36.300 v15.8.0, Jan. 2020.

\bibitem{aoisp1}
X.~{Liu}, Z.~{Qin}, Y.~{Gao}, and J.~A. {McCann}, ``Resource allocation in
  wireless powered iot networks,'' \emph{IEEE Internet of Things Journal},
  vol.~6, no.~3, pp. 4935--4945, Jun. 2019.

\bibitem{aoisp2}
G.~{Hattab} and D.~{Cabric}, ``Energy-efficient massive iot shared spectrum
  access over uav-enabled cellular networks,'' \emph{IEEE Transactions on
  Communications}, vol.~68, no.~9, pp. 5633--5648, Sep. 2020.

\bibitem{aoisp3}
Q.~{Abbas}, S.~{Zeb}, S.~A. {Hassan}, R.~{Mumtaz}, and S.~A.~R. {Zaidi},
  ``Joint optimization of age of information and energy efficiency in iot
  networks,'' in \emph{Proc. of IEEE Vehicular Technology Conference}, Antwerp,
  Belgium, May 2020, pp. 1--5.

\bibitem{gatwill2}
Y.~{Sun}, E.~{Uysal-Biyikoglu}, R.~D. {Yates}, C.~E. {Koksal}, and N.~B.
  {Shroff}, ``Update or wait: How to keep your data fresh,'' \emph{IEEE
  Transactions on Information Theory}, vol.~63, no.~11, pp. 7492--7508, Nov.
  2017\color{black}.

\bibitem{jintime}
R.~D. {Yates}, E.~{Najm}, E.~{Soljanin}, and J.~{Zhong}, ``Timely updates over
  an erasure channel,'' in \emph{Proc. of IEEE International Symposium on
  Information Theory}, Aachen, Germany, Jun. 2017, pp. 316--320.

\bibitem{backoff}
A.~{Mazin}, M.~{Elkourdi}, and R.~D. {Gitlin}, ``Comparison of slotted
  aloha-{NOMA} and {CSMA/CA} for {M2M} communications in iot networks,'' in
  \emph{Proc. of IEEE Vehicular Technology Conference}, Illinois, USA, Aug.
  2018, pp. 1--5.

\bibitem{iotshort}
A.~{Sivanathan}, D.~{Sherratt}, H.~H. {Gharakheili}, A.~{Radford},
  C.~{Wijenayake}, A.~{Vishwanath}, and V.~{Sivaraman}, ``Characterizing and
  classifying iot traffic in smart cities and campuses,'' in \emph{Proc. of
  International Conference on Computer Communications Workshops}, Georgia, USA,
  May 2017, pp. 559--564.

\bibitem{nocent}
W.~{Ejaz}, M.~{Naeem}, A.~{Shahid}, A.~{Anpalagan}, and M.~{Jo}, ``Efficient
  energy management for the internet of things in smart cities,'' \emph{IEEE
  Communications Magazine}, vol.~55, no.~1, pp. 84--91, Jan. 2017.

\bibitem{iotkolkata}
T.~{Park} and W.~{Saad}, ``Kolkata paise restaurant game for resource
  allocation in the internet of things,'' in \emph{Proc. of Asilomar Conference
  on Signals, Systems, and Computers}, California, USA, Nov. 2017, pp.
  1774--1778.

\bibitem{iotd2d1}
O.~{Bello} and S.~{Zeadally}, ``Intelligent device-to-device communication in
  the internet of things,'' \emph{IEEE Systems Journal}, vol.~10, no.~3, pp.
  1172--1182, Sep. 2016.

\bibitem{iotd2d2}
J.~{Lianghai}, B.~{Han}, M.~{Liu}, and H.~D. {Schotten}, ``Applying
  device-to-device communication to enhance {IoT} services,'' \emph{IEEE
  Communications Standards Magazine}, vol.~1, no.~2, pp. 85--91, Jul. 2017.

\bibitem{iotd2d3}
T.~{Zhang}, H.~{Fan}, J.~{Loo}, and D.~{Liu}, ``User preference aware caching
  deployment for device-to-device caching networks,'' \emph{IEEE Systems
  Journal}, vol.~13, no.~1, pp. 226--237, Mar. 2019.

\bibitem{iotd2d4}
I.~O. {Sanusi}, K.~M. {Nasr}, and K.~{Moessner}, ``A device to device ({D2D})
  spectrum sharing scheme for wireless industrial applications,'' in
  \emph{European Conference on Networks and Communications}, Valencia, Spain,
  Jun. 2019, pp. 353--357.

\bibitem{iotd2d5}
J.~{Li}, G.~{Lei}, G.~{Manogaran}, G.~{Mastorakis}, and C.~{X. Mavromoustakis},
  ``D2d communication mode selection and resource optimization algorithm with
  optimal throughput in 5g network,'' \emph{IEEE Access}, vol.~7, pp.
  25\,263--25\,273, Feb. 2019.

\bibitem{aoiinst}
S.~{Gopal}, S.~K. {Kaul}, R.~{Chaturvedi}, and S.~{Roy}, ``A non-cooperative
  multiple access game for timely updates,'' \emph{arXiv preprint
  arXiv:2001.08850}, 2020\color{black}.

\bibitem{50rblte}
K.~L. V. S.~P. {Sakuru} and M.~{Visali}, ``Power control based resource
  allocation in {LTE} uplinks,'' in \emph{Proc. of International Conference on
  Communications and Signal Processing}, Melmaruvathur, India, Apr. 2015, pp.
  579--582.

\bibitem{1msts}
G.~{Hampel}, C.~{Li}, and J.~{Li}, ``{5G} ultra-reliable low-latency
  communications in factory automation leveraging licensed and unlicensed
  bands,'' \emph{IEEE Communications Magazine}, vol.~57, no.~5, pp. 117--123,
  May 2019.

\bibitem{rach}
M.~Ali, E.~Hossain, and D.~Kim, ``{LTE/LTE-A} random access for massive
  machine-type communications in smart cities,'' \emph{IEEE Communications
  Magazine}, vol.~55, no.~1, pp. 76--83, Jan. 2017.

\end{thebibliography}
\end{document}